\newcites{supp}{References}
\newcommand{\ourtitle}{{Sequentially Doubly Robust Estimation of Conditional Survival Probability with Time-Varying Covariates}}
\title{\ourtitle}
\date{}
\author[1]{Hongxiang Qiu}
\author[2,3]{Marco Carone}
\author[4,5]{Alex Luedtke}
\author[2,3,6]{Peter B. Gilbert}
\affil[1]{Department of Epidemiology and Biostatistics, Michigan State University}
\affil[2]{Department of Biostatistics, University of Washington}
\affil[3]{Vaccine and Infectious Disease Division, Fred Hutchinson Cancer Center}
\affil[4]{Department of Health Care Policy, Harvard Medical School}
\affil[5]{Department of Statistics, University of Washington}
\affil[6]{Public Health Sciences Division, Fred Hutchinson Cancer Center}
\newtheorem{theorem}{Theorem}
\newtheorem{lemma}{Lemma}
\newtheorem{corollary}{Corollary}
\theoremstyle{definition}
\newtheorem{remark}{Remark}
\newtheorem{condition}{Condition}
\newcommand{\real}{{\mathbb{R}}}
\newcommand{\ind}{{\mathbbm{1}}}
\newcommand{\expect}{{\mathbb{E}}}
\newcommand{\intd}{{\mathrm{d}}}
\newcommand{\smallo}{{\mathrm{o}}}
\newcommand{\G}{{\mathrm{G}}}
\newcommand{\IPCW}{{\mathrm{IPCW}}}
\newcommand{\SDR}{{\mathrm{SDR}}}
\newcommand{\T}{{\mathscr{T}}}
\newcommand{\TS}{{\mathscr{C}}}
\newcommand{\const}{{\mathrm{C}}}
\newcommand{\CDE}{{\mathrm{CDE}}}
\newcommand\independent{\protect\mathpalette{\protect\independenT}{\perp}}
\def\independenT#1#2{\mathrel{\rlap{$#1#2$}\mkern2mu{#1#2}}}
\begin{document}

\maketitle

\begin{abstract}
    It is often of interest to study the association between covariates and the cumulative incidence of a right-censored time-to-event outcome.
    When time-varying covariates are measured on a fixed discrete time scale, it is desirable to account for these more up-to-date covariates when addressing censoring.
    For example, in vaccine trials, it is of interest to study the association between immune response levels after administering the vaccine and the cumulative incidence of the endpoint, while accounting for loss to follow-up explained by immune response levels measured at multiple post-vaccination visits. 
    Existing methods rely on stringent parametric assumptions, do not account for informative censoring due to time-varying covariates when time is continuous, only estimate a marginal survival probability, or do not fully use the discrete-time structure of post-treatment covariates.
    We propose a nonparametric estimator of the continuous-time survival probability conditional on covariates, accounting for censoring due to time-varying covariates measured on a fixed discrete time scale. We show that the estimator is sequentially doubly robust: it is 
    consistent 
    if, within each time window between adjacent visits, the censoring distribution is 
    consistently estimated,
    or both the time-to-event distribution and a conditional mean probability are 
    consistently estimated.
    We also show that, in the special case of estimating the marginal survival probability, our estimator is asymptotically efficient.
    We demonstrate the superior performance of our estimator in a simulation experiment, and apply the method to a COVID-19 vaccine efficacy trial.
\end{abstract}

\section{Introduction} \label{sec: intro}

In many studies such as cohort studies and clinical trials, the outcome of interest is a continuous time-to-event subject to right-censoring. Participants attend study visits regularly on a fixed discrete time scale and have covariates measured at these visits.
In such settings, an important scientific question is to estimate the association between baseline or time-varying covariates and the survival probabilities at particular time points.
For example, in vaccine trials, participants receive a few doses at predetermined times.
At these visits, time-varying covariates such as immune response biomarkers are also measured.
It is of interest to estimate the disease-free survival time conditional on these covariates, as identifying biomarkers that strongly associate with disease risk has many applications such as 
validating surrogate endpoints. For vaccines, there is a large literature on the use of immune response biomarkers as surrogate endpoints for disease endpoints, with many applications including shedding light on mechanisms of protection, accelerating approval of vaccines, and accelerating iterative improvement of vaccines \protect\citep[e.g.,][]{Plotkin2008,gilbert2022covid}. Moreover, the efficacy of a vaccine often wanes, in which case the immune response surrogate generally only applies for disease occurrence within a limited period of follow-up after vaccination. By providing results on how the conditional survival curve changes with time since vaccination, our methods directly address this issue by characterizing how the biomarker-disease association changes with time.
One major challenge in such scenarios is potential informative censoring due to post-vaccination time-varying covariates.

Our motivating question arises from a COVID-19 vaccine efficacy trial \protect\citep{Dayan2023}. The scientific question of interest concerns the association between antibody levels and the incidence of COVID-19 occurrence adjusting for baseline covariates in a general population.
As participants had time-varying antibody levels measured at multiple visits after receiving two doses of vaccines or placebo, it is useful to account for these post-treatment covariates that might explain censoring afterwards, even though the scientific question is unrelated to these additional covariates.

Existing approaches to estimating survival probabilities with time-varying covariates often rely on strong parametric or semiparametric assumptions. For example, in the Cox proportional hazard model, the form of the hazard or intensity function is restricted to be proportional to time-varying covariates \protect\citep{Andersen2007}. The additive hazard model also makes a strong semiparametric assumption on the model, namely the additive effects of time-varying covariates on the hazard are constant over time \protect\citep{Aalen1980}. The traditional accelerated failure time model assumes a parametric model on the true distribution \protect\citep{Klein2003}. Although this approach also has semiparametric versions
\protect\citep[e.g.,][]{Prentice1978,Kalbfleisch2011},
the assumption on the covariate effects may still be stringent.
The extended hazard regression model is a more general model including the Cox proportional hazard model and the accelerated failure time model as special cases \protect\citep{Etezadi-Amoli1987,Tseng2014}, but the assumptions might fail in practice.
In addition, standard software implementing these methods can often only be applied when the dimension of time-varying covariates is fixed; however, in a trial setting, it is often desirable to utilize all historical covariates at each time point, leading to an increasing dimension of time-varying covariates.

Beyond parametric and semiparametric methods, nonparametric methods have also been developed. These approaches allow for using almost any flexible method to estimate survival curves conditional on time-varying covariates \protect\citep{Hubbard2000,Rotnitzky2005,VanderLaan2002,VanderLaan1998}. Such flexibility is desirable because stringent parametric or semiparametric assumptions are not required, and such methods have become feasible with recent advances in machine learning techniques for censored data \protect\citep{Ishwaran2008,Katzman2018,Kvamme2019,Westling2023}.
However, these methods focus on estimating a marginal survival probability, or more generally, a summary of the marginal survival curve, with the aid of time-varying covariates, and such methods cannot be directly used to estimate the survival probabilities conditional on continuous covariates.
A notable exception is \protect\citet{Hubbard2000}, where the authors briefly mentioned the estimation of conditional survival probabilities but provided no convergence or robustness results.

The above works also do not explicitly consider time-varying covariates collected on a fixed time scale and mainly focus on general time-varying covariates that may be measured continuously.
The discrete-time structure of time-varying covariates introduces additional convenience in nonparametric estimation compared to continuously measured time-varying covariates.
Similar methods have also been developed for the case with only baseline covariates and no time-varying covariates \protect\citep[e.g.,][]{Bai2013,Sjolander2017,Westling2023}.
For discrete times, \protect\citet{Rotnitzky2017} and \protect\citet{Luedtke2017} proposed sequentially doubly robust (SDR, a special case of multiply robust) methods to estimate survival probabilities conditional on covariates.

In this paper, we propose a flexible, nonparametric, and SDR estimator of survival probabilities conditional on covariates for continuous time-to-event, while accounting for informative right-censoring due to time-varying covariates measured on a fixed discrete time scale.
We also show that this estimator can be specialized to estimate a marginal survival probability and is asymptotically efficient under a nonparametric model.
In doing so, we derive novel identifications of conditional survival probabilities and a novel SDR transformation based on semiparametric efficiency theory \protect\citep[e.g.,][]{Pfanzagl1985,Pfanzagl1990,Newey1990,Bolthausen2002}.
The proposed estimators are implemented in an R package available at \url{https://github.com/QIU-Hongxiang-David/MRsurv}.
Our proposed estimators move beyond the limitations of existing discrete-time settings, offering more granular estimation of conditional survival probabilities in dynamic longitudinal settings.

\section{Problem setup} \label{sec: setup}

Suppose that each individual's covariates are measured at fixed time points $0=: t_1 < \ldots < t_K$, which we will refer to as visit times.
Let $T>0$ denote the time-to-event of interest---such as disease, infection, progression, or death---and $C>0$ denote the time to right-censoring such as loss to follow-up from the trial or end of the trial.
Each of $T$ and $C$ may be continuous or discrete. 
Let $\tilde{L}_k \in \mathcal{L}_k$ denote the vector of covariates that would have been measured at visit time $t_k$ ($k=1,\ldots,K$) if censoring were not present, which is set to zero if the event occurs before or at $t_k$.
Throughout this paper, we use $a \wedge b$ to denote $\min\{a,b\}$ for $a,b \in \real$.
Let $X=T \wedge C$ be the time of follow-up and $\Delta= \ind(T \leq C)$ be the indicator of observing the event. We define $L_k := \ind(X > t_k) \tilde{L}_k$ to be the covariates that are actually observed at $t_k$.
We also assume that censoring occurs before measuring covariates at $t_k$ if $C=t_k$. 
We define $\tilde{H}_k := (\tilde{L}_1,\ldots,\tilde{L}_k)$ and $H_k := (L_1,\ldots,L_k)$ ($k=1,\ldots,K$) to be the historical covariates that would have been measured if censoring were not present and are actually measured up to visit time $t_k$, respectively, and define $\mathcal{H}_k := \prod_{j=1}^{k} \mathcal{L}_j$.
Finally, a prototypical observation data point is defined as $O:=(H_K,X,\Delta)$. We assume that an independent and identically distributed sample consisting of observation units $O_1,\ldots,O_n$ is drawn from a true distribution $P_*$ in a nonparametric model.
We illustrate this study design and data structure in Fig.~\ref{fig: data structure}.
We also let $\tilde{O} := (\tilde{H}_K,T,C)$ be a prototypical full data point and assume that $\tilde{O}_1,\ldots,\tilde{O}_n$ are independent draws from $\tilde{P}_*$.

\begin{figure}
    \centering
    \includegraphics[width=0.6\linewidth]{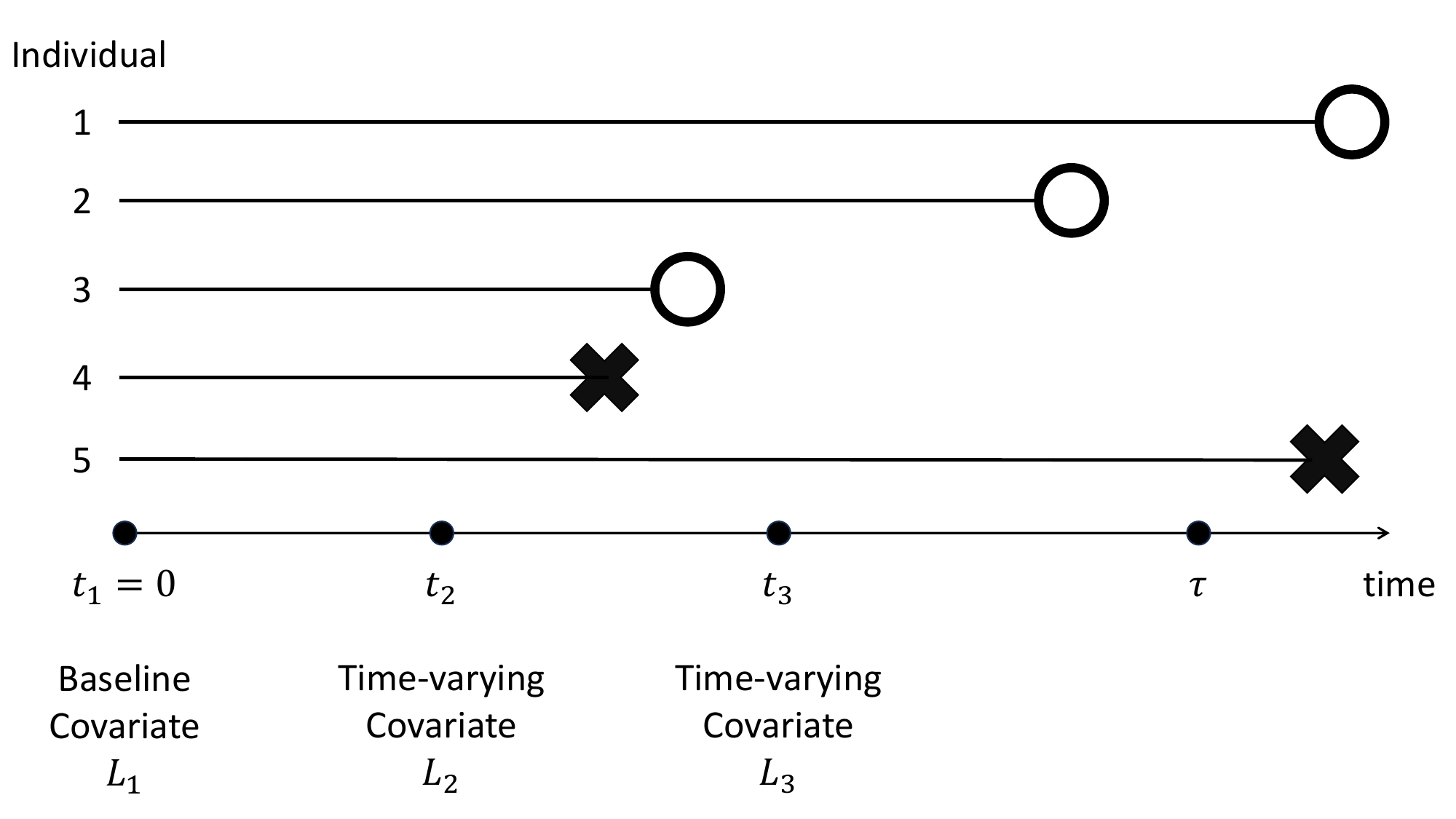}
    \caption{Illustration of the study design and data structure for the case $K=3$. Each horizontal line represents an individual's trajectory. An ending cross stands for an event of interest, such as disease or death; an ending circle stands for censoring, for example, due to loss of follow-up or end of the study. Individuals 3 and 4 experience the event or drop out of the study before $t_3$ and their time-varying covariates $L_3$ at $t_3$ are set to zero.}
    \label{fig: data structure}
\end{figure}

We next describe the estimand, the survival probability at a time point $\tau \in (0,\infty)$ conditional on covariates.
Without loss of generality, we assume that $\tau > t_K$ and define $t_{K+1} := \tau$.
Let $\varsigma \in \{1,2,\ldots,K\}$ be a given index such that covariates of interest are measured at $t_\varsigma$.
The estimand we consider is the function $\beta_{*}: \tilde{h}_\varsigma \mapsto \tilde{P}_*(T > \tau \mid T>t_\varsigma, \tilde{H}_\varsigma=\tilde{h}_\varsigma)$ for $\tilde{h}_\varsigma \in \mathcal{H}_\varsigma$.
This function describes the association between covariate $\tilde{H}_\varsigma$ available at time $t_\varsigma$ and the survival probability at a future time $\tau$, in the subpopulation that survives through time $t_\varsigma$ for which $\tilde{H}_\varsigma$ is observed under no censoring.

We generally use a subscript $*$ to denote functions or quantities defined by the true distribution $\tilde{P}_*$ or $P_*$.
When writing expectations under $P_*$ or $\tilde{P}_*$, we may use the shorthand notation $\expect_*$.
We adopt the convention that $\sum_{i=a}^b \cdots = 0$ and $\prod_{i=a}^b \cdots = 1$ whenever $a > b$, and $0 \cdot c = 0$ regardless of whether $c$ is well defined or finite.

We will use $a \lesssim b$ for two real numbers $a$ and $b$ to denote $a \leq \const b$ where $\const>0$ is a constant that may depend on $P_*$.
We sometimes use $\const$ to denote a generic positive constant.
For a function $f: \mathcal{H}_k \rightarrow \real$ ($k=1,\ldots,K$), we define its conditional $L^2(P_*)$-norm $\| f \|_{k} := \{ \expect_*[f(H_k)^2 \mid X>t_k] \}^{1/2}$. This definition is natural since $H_k$ is fully observed only conditional on $X>t_k$.
For a function $f: \mathcal{S} \times \mathcal{H}_k \rightarrow \real$ where $k=1,\ldots,K$ and $\mathcal{S} \subseteq (0,\infty)$, we define $\| f \|_{\mathcal{S},k} := \parallel \sup_{t \in \mathcal{S}} |f(t, \cdot)| \parallel_{k}$.
We will consider the asymptotic scenario where $n \to \infty$ with a fixed data-generating distribution $P_*$, and use big-O and little-o notations accordingly.
\section{Identification formulas}
\label{sec: identification}

We assume the following \emph{coarsening at random} condition \protect\citep{Gill1997,VanderLaan2002} throughout, that is, future events are independent of the coarsening $(C \wedge t_{k+1})$ in the next time window given the observed history $(X>t_k, H_k)$.
This condition reduces to the following \emph{conditionally independent censoring} in our setting.

\begin{condition}[Conditionally independent censoring] \label{condition: cond ind cens}
    $(T \wedge \tau, \tilde{L}_{k+1},\ldots,\tilde{L}_K) \independent (C \wedge t_{k+1}) \mid X>t_k, \tilde{H}_k$ for all $k=1,\ldots,K$.
\end{condition}

This condition states that, given staying in the study up to visit time $t_k$, the historical covariates $\tilde{H}_k$ contain all information to explain any dependence between (i) the remaining time to event and all future covariates, and (ii) the time to censoring in the next window $(t_k,t_{k+1}]$. 
Under this condition, conditioning on survival $T>t_k$ and at-risk $X>t_k$ are equivalent given all historical covariates $\tilde{H}_k$. See Lemma~\ref{lemma: at-risk set} in Supplement~\ref{sec: proof}.

For $t \in (t_k,t_{k+1}]$, we define the following two survival functions pointwise:
\begin{align*}
    S_{*,k}(t \mid h_k) &:= \tilde{P}_*(T > t \mid X>t_k, H_k=h_k), \\
    G_{*,k}(t \mid h_k) &:= \tilde{P}_*(C > t \mid X>t_k, H_k=h_k).
\end{align*}

We also assume the following positivity condition throughout.
\begin{condition}[Positivity] \label{condition: positive prob uncensored}
    The probabilities $G_{*,k}(t_{k+1} \mid H_k)$ ($k=1,\ldots,K$) of not being censored are all positive $P_*$-almost surely.
\end{condition}

Condition~\ref{condition: positive prob uncensored} ensures that some observations are uncensored by the time of interest $\tau$, so that it is possible to learn about survival probability at $\tau$ without extrapolating.
Under the above conditions, $S_{*,k}$ and $G_{*,k}$ can both be nonparametrically identified using hazards conditional on $H_k$, similarly to Kaplan-Meier.
These conditions lead to the following identification formula in Theorem~\ref{thm: G identify}.

\begin{theorem}[G-computation formula] \label{thm: G identify}
    Define recursively
    \begin{align*}
        & U_{*,K}(H_K) := 1, \qquad Q_{*,k}(H_k) := U_{*,k}(H_k) S_{*,k}(t_{k+1} \mid H_k) \quad (k=K,\ldots,1), \\
        & U_{*,k}(H_k) := \expect_*[Q_{*,k+1}(H_{k+1}) \mid X>t_{k+1}, H_k] \quad (k=K-1,\ldots,1).
    \end{align*}
    Under Conditions~\ref{condition: cond ind cens} and \ref{condition: positive prob uncensored}, $\beta_{*}(H_\varsigma) = Q_{*,\varsigma}(H_\varsigma)$ $P_*$-almost surely.
\end{theorem}

\begin{remark}
    We allow $U_{*,k}(h_k)$ to be undefined for $h_k \in \mathcal{H}_k$ such that $P_*(X>t_{k+1} \mid X>t_k,H_k=h_k)=0$.
    Under Conditions~\ref{condition: cond ind cens} and \ref{condition: positive prob uncensored}, such $h_k$ must yield $S_{*,k}(t_{k+1} \mid h_k) = 0$ and so $Q_{*,k}$ is still well defined in the entire support of $H_k \mid X>t_k$.
\end{remark}

Moreover, a SDR identification formula can be derived. For any conditional survival functions $S_k, G_k: (t_k, t_{k+1}] \times \mathcal{H}_k \to [0,1]$, $x,t \in (t_k,t_{k+1}]$ and $\delta \in \{0,1\}$, define pointwise
\begin{align*}
    &\TS_{k,t}(S_k,G_k)(x,\delta \mid h_k) \\
    &:= S_k(t \mid h_k) -S_k(t \mid h_k) \Bigg\{ \frac{\ind(x \leq t) \delta}{S_k(x \mid h_k) G_k(x- \mid h_k)} + \int_{(t_k,x \wedge t]} \frac{S_k(\intd s \mid h_k)}{S_k(s \mid h_k) S_k(s- \mid h_k) G_k(s- \mid h_k)} \Bigg\}
\end{align*}
whenever $G_k(t- \mid h_k) > 0$ and zero otherwise.
Here, the integral is a Lebesgue-Stieltjes integral with respect to $S_k(\cdot \mid h_k)$.
This is a one-step corrected transformation of $S_k$, which resembles the doubly robust transformation for estimating conditional mean counterfactual and conditional average treatment effect \protect\citep[Section~3.1 in][]{VanDerLaan2013}, and leverages the influence function of marginal survival probabilities \protect\citep{Bai2013,Hubbard2000}.
This transformation is doubly robust in the sense that, under Conditions~\ref{condition: cond ind cens} and \ref{condition: positive prob uncensored}, for any conditional survival functions $S_k$ and $G_k$,
\begin{align}
    & S_{*,k}(t \mid H_k) = \expect_* \left[ \TS_{k,t}(S_k,G_k)(X,\Delta \mid H_k) \mid X>t_k, H_k \right] \text{ $P_*$-almost surely} \label{eq: DR transform} \\
    & \text{for all $t \in (t_k,t_{k+1}]$ if (i) $S_k=S_{*,k}$, or (ii) $G_k=G_{*,k}$.} \nonumber
\end{align}
A proof of this equation can be found in Supplement~\ref{sec: proof identify}.
Define pointwise
\begin{align}
    &\T_{k}( \{U_j, S_j, G_j\}_{j=k}^K)(O) \nonumber \\
    &:= \sum_{j=k+1}^{K} \ind(X>t_j) \left\{ \prod_{\ell=k}^{j-1} \frac{1}{G_{\ell}(t_{\ell+1} \mid H_{\ell})} \right\} \{ U_j(H_j) \TS_{j,t_{j+1}}(S_j,G_j)(X,\Delta \mid H_j) - U_{j-1}(H_{j-1}) \} \nonumber \\
    &\qquad+ U_k(H_k) \TS_{k,t_{k+1}}(S_k,G_k)(X,\Delta \mid H_k) \label{eq: SDR transform}
\end{align}
whenever $G_k(t_{k+1} \mid H_k) > 0$ for all $k$ and zero otherwise, where $U_K:=1$.
We show that Eq.~\ref{eq: SDR transform} is a SDR \protect\citep{Luedtke2017} transformation of nuisance functions $\{U_j, S_j, G_j\}_{j=k}^K$ in the following sense.
\begin{theorem}[SDR formula] \label{thm: SDR identify}
    Under Conditions~\ref{condition: cond ind cens} and \ref{condition: positive prob uncensored}, for all $k=1,\ldots,K$,
    $$\expect_*[\T_{k}( \{U_j, S_j, G_j\}_{j=k}^K)(O) \mid X>t_k, H_k] = Q_{*,k}(H_k) \text{ $P_*$-almost surely}$$
    if, for all $j = k,\ldots,K$, (i) $S_j=S_{*,j}$ for all $t \in (t_j,t_{j+1}]$ and $U_j=U_{*,j}$, or (ii) $G_j=G_{*,j}$ for all $t \in (t_j,t_{j+1}]$.
\end{theorem}

Since Theorem~\ref{thm: G identify} implies $Q_{*,\varsigma}=\beta_{*}$, Theorem~\ref{thm: SDR identify} yields a SDR identification formula for $\beta_{*}$ at $k=\varsigma$.
By the definition of $U_{*,k}$ in Theorem~\ref{thm: G identify}, Theorem~\ref{thm: SDR identify} implies that
\begin{equation}
    \expect_*[\T_{k}( \{U_j, S_j, G_j\}_{j=k}^K)(O) \mid X>t_k, H_{k-1}] = U_{*,k-1}(H_{k-1}) \label{eq: SDR U}
\end{equation}
for all $k=2,\ldots,K$ under the pattern of correct model specification in Theorem~\ref{thm: SDR identify}.
Because our problem is featured in a mixture of continuous time and covariates collected on a discrete time grid, both $S_{*,j}$ and $U_{*,j}$ contain some information of the outcome.
Hence, Theorem~\ref{thm: SDR identify} shows a similar SDR structure to discrete time problems \protect\citep{Rotnitzky2017,Luedtke2017}: the conditional mean is Fisher consistent if, in each time window $(t_j,t_{j+1}]$, either the outcome models ($S_j$ and $U_{j}$) or the censoring distribution ($G_j$) is correctly specified.
Theorem~\ref{thm: SDR identify} goes beyond existing results by applying in continuous time with a novel combination of these nuisance functions. 

An inverse probability of censoring weighting (IPCW) formula also follows from Theorem~\ref{thm: SDR identify} by setting $S_j=1$, $G_j=G_{*,j}$ ($j=1,\ldots,K$), and $U_j=0$ ($j=1,\ldots,K-1$).
Though this approach is not SDR, it may lead to convenient regression estimators, similarly to some marginal structural model estimators \protect\citep[e.g.,][]{Hernan2000}.
Another IPCW formula can be derived by setting $U_j=1$ ($j=1,\ldots,K-1$) instead, but this formula may be tedious to compute since it represents $\beta_*$ as an average of weighted sums of events over time windows $(t_k,t_{k+1}]$ ($k=1,\ldots,K$).

\begin{corollary}[IPCW formula] \label{coro: IPCW identify}
    Under Conditions~\ref{condition: cond ind cens} and \ref{condition: positive prob uncensored},
    $$\beta_{*}(H_\varsigma) = \expect_* \left[ \ind(X>t_K) \left\{ \prod_{j=\varsigma}^{K-1} \frac{1}{G_{*,j}(t_{j+1} \mid H_j)} \right\} \left\{ 1 - \frac{\ind(X \leq \tau) \Delta}{G_{*,K}(X- \mid H_K)} \right\} \mid X>t_\varsigma, H_\varsigma \right]$$
    $P_*$-almost surely.
\end{corollary}

\section{Sequentially doubly robust estimation of conditional survival probability function} \label{sec: SDR estimator}

We propose a SDR estimator of the conditional survival probability $\beta_{*}$ based on Theorem~\ref{thm: SDR identify}.
We propose to first estimate nuisance functions $\{S_{*,k},G_{*,k}\}_{k=\varsigma}^K$ flexibly with $\{\hat{S}_k,\hat{G}_k\}_{k=\varsigma}^K$, then compute a pseudo-outcome based on the SDR transformation in \eqref{eq: SDR transform}, and regress the pseudo-outcome onto the history iteratively from $k=K$ to $k=\varsigma$.
The SDR property of our estimator hinges on regressing the SDR transformed pseudo-outcome, rather than a na\"ive product of $\hat{S}_k$ and $\hat{U}^\SDR_{k+1}$ motivated by the identification in Theorem~\ref{thm: G identify}, when estimating $U_{*,k}$ as well as $Q_{*,\varsigma}$.
We go beyond the discrete-time SDR estimators \protect\citep[e.g.,][]{Rotnitzky2017,Luedtke2017} by applying our novel continuous-time SDR transformation.
For illustration, we describe the algorithm without sample splitting in Algorithm~\ref{alg: SDR}.

In practice, estimators $\{\hat{S}_k,\hat{G}_k\}_{k=\varsigma}^K$ may be obtained via, for example, a Cox proportional hazard model \protect\citep{Andersen1982,Cox1972} in conjunction with the Breslow estimator \protect\citep{Breslow1972}, accelerated failure time model \protect\citep{Wei1992}, piecewise
constant hazard model \protect\citep{Friedman1982}, Kaplan-Meier estimator in conjunction with nearest neighbor \protect\citep{Beran1981,Chen2019} or kernel methods \protect\citep{Beran1981,Chen2019,Dabrowska1989}, survival trees \protect\citep{LeBlanc1993,Segal1988}, random survival forests \protect\citep{Ishwaran2008}, Cox model in conjunction with neural networks \protect\citep{Katzman2018,Kvamme2019}, global or local survival stacking \protect\citep{Craig2021,Wolock2024}, or an ensemble of several aforementioned methods via survival Super Learner \protect\citep{Westling2023}.

\begin{algorithm}[htb]
    \caption{SDR estimator of $\beta_{*}$} \label{alg: SDR}
    \begin{algorithmic}[1]
        \Require Data $\{O_1,\ldots,O_n\}$, algorithm to estimate nuisance conditional survival functions $S_{*,k}$ and $G_{*,k}$ ($k=\varsigma,\ldots,K$), regression algorithm to estimate conditional mean.
        \State Estimate nuisance conditional survival functions: For all $k=\varsigma,\ldots,K$, estimate $(t \mid h_k) \mapsto S_{*,k}(t \mid h_k)$ and $(t \mid h_k) \mapsto G_{*,k}(t \mid h_k)$ with $\hat{S}_k$ and $\hat{G}_k$, respectively, for $t \in (t_k,t_{k+1}]$ and $k=\varsigma,\ldots,K$. The function estimators $t \mapsto \hat{S}_k(t \mid h_k)$ and $t \mapsto \hat{G}_k(t \mid h_k)$ must be non-negative decreasing c\`{a}dl\`{a}g for all $h_k$ and satisfy $\hat{S}_k(t_k \mid h_k) = \hat{G}_k(t_k \mid h_k) = 1$. \label{alg step: est S G}
        \State Set $\hat{U}_{K}^\SDR=1$.
        \For{$k=K,\ldots,\varsigma$}
        \State Compute the pseudo-outcome $\T_{k}( \{\hat{U}_{j}^\SDR, \hat{S}_j, \hat{G}_j\}_{j=k}^K)(O)$
        defined in \eqref{eq: SDR transform} for each individual at risk at visit time $t_k$ ($X>t_k$). \label{alg step: pseudo-outcome}
        \If{$k>\varsigma$}
            \State Regress $\T_{k}( \{\hat{U}_{j}^\SDR, \hat{S}_j, \hat{G}_j\}_{j=k}^K)(O)$ on $H_{k-1}$ among individuals with $X>t_k$. The fitted predictive model $\hat{U}_{k-1}^\SDR$ is the estimator of $U_{*,k-1}$. \label{alg step: regress U}
        \Else
            \State Regress $\T_{k}( \{\hat{U}_{j}^\SDR, \hat{S}_j, \hat{G}_j\}_{j=k}^K)(O)$ on $H_k$ among individuals with $X>t_k$. The fitted predictive model $\hat{Q}_{k}^\SDR$ is the estimator of $Q_{*,k}$. \label{alg step: regress}
        \EndIf
        \EndFor
        \State\Return $\hat{Q}_{\varsigma}^\SDR$. \label{alg step: output}
    \end{algorithmic}
\end{algorithm}

Sample splitting or cross-fitting \protect\citep{Schick1986,Klaassen2007,Kennedy2022}, namely training and evaluating the nuisance functions (Lines~\ref{alg step: est S G} and \ref{alg step: pseudo-outcome}, respectively) on independent samples, can also be applied in Algorithm~\ref{alg: SDR} to allow more flexibility in the nuisance function estimators $\{\hat{S}_k,\hat{G}_k,\hat{U}_k^\SDR\}$ and to improve performance in small to moderate samples.
We discuss cross-fitting in more details and describe our proposed cross-fit algorithm in Algorithm~\ref{alg: CV SDR} in Supplement~\ref{sec: CV}.
G-computation and IPCW estimators can be constructed similarly based on Theorem~\ref{thm: G identify} and Corollary~\ref{coro: IPCW identify}, respectively.
We use superscripts $\G$ and $\IPCW$ instead of $\SDR$ to denote these estimators, and describe them in Algorithms~\ref{alg: Gcomp} and \ref{alg: IPCW} in Supplement~\ref{sec: G IPCW alg}, respectively.

We next present theoretical results about $\hat{Q}_{\varsigma}^\SDR$.
We define the following oracle estimators
\begin{align}
    &\bar{U}_{K}^\SDR := 1, \nonumber \\
    &\bar{U}_{k}^\SDR: h_k \mapsto \expect_*[\T_{k+1}( \{\hat{U}_{j}^\SDR, \hat{S}_j, \hat{G}_j\}_{j=k+1}^K)(O) \mid X>t_{k+1}, H_k=h_k], \quad (k=K-1,\ldots,\varsigma) \label{eq: oracle U} \\
    &\bar{Q}_{\varsigma}^\SDR: h_\varsigma \mapsto \expect_*[\T_{k}( \{\hat{U}_{j}^\SDR, \hat{S}_j, \hat{G}_j\}_{j=\varsigma}^K)(O) \mid X>t_\varsigma, H_\varsigma=h_\varsigma] \label{eq: oracle Q}
\end{align}
of $U_{*,k}$ and $Q_{*,\varsigma}$, respectively, where $O$ denotes a random draw from $P_*$ independently from the data.
Throughout this paper, when random functions such as $\hat{U}_{k}^\SDR$, $\hat{S}_{k}$ and $\hat{G}_{k}$ appear in an expectation, we treat these functions as fixed when taking the expectation. For example, $\bar{U}_{k-1}^\SDR$ is random through $\{\hat{U}_{j}^\SDR, \hat{S}_j, \hat{G}_j\}_{j=k}^K$ because the expectation does not account for the randomness in these estimators.
These oracle estimators are the regression of the pseudo-outcomes with \emph{nuisance estimators} in the \emph{population}.
Since $U_{*,k}$ is the regression of pseudo-outcomes with \emph{true nuisances} in the population, and $\hat{U}_{k}^\SDR$ is the regression of pseudo-outcomes with nuisance estimators in the \emph{sample}, we analyze $\hat{U}_{k}^\SDR-U_{*,k}$ via the decomposition
$$\hat{U}_{k}^\SDR-U_{*,k} = \underbrace{(\bar{U}_{k}^\SDR - U_{*,k})}_{\text{bias due to nuisance estimation}} + \underbrace{(\hat{U}_{k}^\SDR-\bar{U}_{k}^\SDR)}_{\text{regression estimation error}}.$$
We study $\hat{Q}_{\varsigma}^\SDR-Q_{*,\varsigma}$ similarly.
Similar decompositions have been adopted elsewhere---for example, Theorem~2 in \protect\citet{Yang2023} bounds a similar bias term; in the marginal case of estimating a scalar parameter, such a bias term corresponds to the higher-order remainder for a large class of asymptotically efficient estimators \citep[e.g.,][]{Kennedy2022}.
The oracle estimators for G-computation and IPCW estimators are defined similarly in \eqref{eq: G IPCW oracle} in Supplement~\ref{sec: G IPCW alg}, with the superscript $\SDR$ replaced by $\G$ and $\IPCW$ respectively.
We first present results about the bias.

\begin{condition}[Strong positivity] \label{condition: consistency of strong positivity}
    For all $k \geq \varsigma$, $G_{*,k}(t_{k+1} \mid H_k)$ is bounded away from zero $P_*$-almost surely; with probability tending to one, $\hat{G}_{k}(t_{k+1} \mid h_k)$ is also bounded away from zero for $P_*$-almost every $h_k$.
\end{condition}

The first half of this condition is a slightly stronger version of Condition~\ref{condition: positive prob uncensored}.
In the following Theorem~\ref{thm: bound}, we define $\| 0 \|_{K+1} := 0$.

\begin{theorem}[Bounds on oracle estimators' biases] \label{thm: bound}
    \leavevmode
    \begin{itemize}
        \item SDR oracle bias: Suppose that Conditions~\ref{condition: cond ind cens} and \ref{condition: consistency of strong positivity} hold, and that $\hat{U}_{k}^\SDR$ is $P_*$-almost surely bounded by some constant with probability tending to one.
        For all $k=\varsigma,\ldots,K$, any square-integrable function $U_j$ where $U_K=1$, any conditional survival functions $S_j$ and $G_j$ ($j=\varsigma,\ldots,K$), define
        \begin{align*}
            \mathscr{B}_{k}(\{U_j, S_j, G_j\}_{j=k}^K) &:= \sum_{j=k}^K \Big\{ \| S_{j} - S_{*,j} \|_{(t_j,t_{j+1}],j} \wedge \| G_{j} - G_{*,j} \|_{(t_j,t_{j+1}],j} \\
            &\qquad+ \| G_{j} - G_{*,j} \|_{(t_j,t_{j+1}],j} \| U_j - U_{*,j} \|_{j+1} \Big\}.
        \end{align*}
        With $\bar{U}_{k}^\SDR$ in \eqref{eq: oracle U} and $\bar{Q}_{\varsigma}^\SDR$ in \eqref{eq: oracle Q}, with probability tending to one,
        \begin{align}
            &\left\| \bar{U}_{k}^\SDR-U_{*,k} \right\|_{k+1} \lesssim \mathscr{B}_{k+1}(\{\hat{U}_{j}^\SDR, \hat{S}_j, \hat{G}_j\}_{j=k+1}^K), \qquad (k=K-1,\ldots,\varsigma) \nonumber \\
            &\left\| \bar{Q}_{\varsigma}^\SDR-Q_{*,\varsigma} \right\|_\varsigma \lesssim \mathscr{B}_{\varsigma}(\{\hat{U}_{j}^\SDR, \hat{S}_j, \hat{G}_j\}_{j=\varsigma}^K). \label{eq: SDR point bound}
        \end{align}

        \item G-computation oracle bias: Under Conditions~\ref{condition: cond ind cens} and \ref{condition: consistency of strong positivity}, 
        if $\hat{U}_{k}^\G$ is $P_*$-almost surely bounded by some constant with probability tending to one, then for all $k=\varsigma,\ldots,K-1$, with probability tending to one,
        \begin{align}
            &\| \bar{U}_{k}^\G - U_{*,k} \|_{k+1} \lesssim \| \hat{S}_{k+1} - S_{*,k+1} \|_{(t_{k+1},t_{k+2}],k+1} + \|\hat{U}_{k+1}^\G - U_{*,k+1} \|_{k+2}, \nonumber \\
            &\| \bar{Q}_{\varsigma}^\G - Q_{*,\varsigma} \|_\varsigma \lesssim \| \hat{S}_\varsigma - S_{*,\varsigma} \|_{(t_\varsigma,t_{\varsigma+1}],\varsigma} + \| \hat{U}_{\varsigma}^\G - U_{*,\varsigma} \|_{\varsigma+1}. \label{eq: G point bound}
        \end{align}

        \item IPCW oracle bias: Under Conditions~\ref{condition: cond ind cens} and \ref{condition: consistency of strong positivity}, with probability tending to one,
        \begin{equation}
            \left\| \bar{Q}_{\varsigma}^\IPCW-Q_{*,\varsigma} \right\|_\varsigma \lesssim \sum_{j=\varsigma}^K \| \hat{G}_{j} - G_{*,j} \|_{(t_j,t_{j+1}],j}. \label{eq: IPCW point bound}
        \end{equation}
    \end{itemize}

\end{theorem}

Apart from the biases in Theorem~\ref{thm: bound}, the estimation errors $\| \hat{U}_{k}^\SDR - \bar{U}_{k}^\SDR \|_{k+1}$ and $\| \hat{Q}_{\varsigma}^\SDR - \bar{Q}_{\varsigma}^\SDR \|_\varsigma$ of the oracle estimators $\bar{U}_{k}^\SDR$ and $\bar{Q}_{\varsigma}^\SDR$ also need to be studied.
These errors can often be shown to be $\smallo_p(1)$ by arguments showing consistency of regression estimators, for example, if the regression estimator is an empirical risk minimizer in a $P_*$-Donsker function class that can well approximate the truth \protect\citep[see, e.g., Chapter~3.2 in][]{vandervaart1996}.
This condition of vanishing estimation errors can be viewed as a generalization of correct model specification in parametric models to nonparametric regression, or a generalization of consistency to random estimands.
Thus, we anticipate these estimation errors to vanish if the regression technique is sufficiently flexible.
Theorem~\ref{thm: bound} implies the following SDR property of $\hat{Q}_{\varsigma}^\SDR$ in terms of the above generalization of consistency or correct specification.

\begin{corollary}[Sequential double robustness] \label{coro: SDR}
    Suppose that $\hat{U}_k^\SDR$ is $P_*$-almost surely bounded by some constant with probability tending to one, and that Conditions~\ref{condition: cond ind cens} and \ref{condition: consistency of strong positivity} hold. If (i) $\| \hat{Q}_{\varsigma}^\SDR-\bar{Q}_{\varsigma}^\SDR \|_\varsigma = \smallo_p(1)$, and (ii) for each $j = \varsigma,\ldots,K$, $(\| \hat{S}_j - S_{*,j} \|_{(t_j,t_{j+1}],j}, \| \hat{U}_j^\SDR - \bar{U}_j^\SDR \|_{j+1}) = \smallo_p(1)$ or $\| \hat{G}_j - G_{*,j} \|_{(t_j,t_{j+1}],j} = \smallo_p(1)$, then $\| \hat{Q}_{\varsigma}^\SDR-Q_{*,\varsigma} \|_\varsigma = \smallo_p(1)$.
\end{corollary}

In contrast, in general, the G-computation estimator $\hat{Q}_{\varsigma}^\G$ would be inconsistent if $\hat{S}_k$ is inconsistent or $\| \hat{U}_{k}^\G-\bar{U}_{k}^\G \|_{k+1} \neq \smallo_p(1)$ for some $k \geq \varsigma$; the IPCW estimator $\hat{Q}_{\varsigma}^\IPCW$ would be inconsistent if $\hat{G}_k$ is inconsistent for some $k \geq \varsigma$.
Thus, these two estimators are not sequentially doubly robust.

\begin{remark} \label{remark: product remainder}
    For many (sequentially) doubly robust estimators, due to the product structure, namely mixed bias property or product bias property \protect\citep{Rotnitzky2021}, of the higher-order remainder, the convergence rate can be bounded by a product of convergence rates of several nuisance functions using Cauchy-Schwarz inequality \protect\citep[e.g.,][]{Luedtke2017,VanderLaan2018}.
    In contrast, the rate we obtain in Theorem~\ref{thm: bound} for the SDR estimator contains terms that are essentially the minimum of two nuisance function estimators' convergence rates.
    This appears to be a particular challenge in survival settings, because Lebesgue-Stieltjes integrals are involved \eqref{eq: SDR transform} and the higher-order remainder is no longer a simple product but a so-called \emph{cross integrated error term} \protect\citep{Ying2023}.
    In Supplement~\ref{sec: product remainder}, we establish that the remainder can be bounded by a product form when both $T$ and $C$ are discrete, that is, the terms $\| S_{j} - S_{*,j} \|_{(t_j,t_{j+1}],j} \wedge \| G_{j} - G_{*,j} \|_{(t_j,t_{j+1}],j}$ in \eqref{eq: SDR point bound} can be strengthened to a product form $\| S_{j} - S_{*,j} \|_{(t_j,t_{j+1}],j} \| G_{j} - G_{*,j} \|_{(t_j,t_{j+1}],j}$.
    Thus, we conjecture that the bound we obtained in Theorem~\ref{thm: bound} for the SDR estimator can also be improved for general distributions on time such as continuous time.
\end{remark}

\section{Sequentially doubly robust estimator of marginal survival probability} \label{sec: marg SDR estimator}

Let $W = W(L_1)$ be a given transformation of the baseline covariate $L_1$, for example, a subset of $L_1$.
Algorithms~\ref{alg: SDR}, \ref{alg: Gcomp} and \ref{alg: IPCW} can be slightly modified to estimate the survival probability conditional on $W$ in the entire population, namely the conditional survival function $w \mapsto \tilde{P}_*(T > \tau \mid W=w)$. Set $\varsigma=1$; when estimating $Q_{*,1}$ by regression for the first time point $t_1$, we can regress the pseudo-outcome on $W$ instead of $L_1$.
This approach is valid because $\tilde{P}_*(T > \tau \mid W) = \expect_*[\beta_*(L_1) \mid W]$ is an average of $\beta_*(L_1)$ conditional on $W$.
An interesting special case is the marginal survival probability, that is, when $W$ is a constant.
The simplified algorithm for the SDR estimator is in Algorithm~\ref{alg: SDR marg}.
We drop the covariate $W$ from our notations in this section, and use $Q_{*,0}$ to denote the marginal survival probability $\tilde{P}_*(T> \tau)=\expect_*[Q_{*,1}(L_1)]$.

\begin{algorithm}
    \caption{Sequentially doubly robust estimator of marginal survival probability $Q_{*,0}$} \label{alg: SDR marg}
    \begin{algorithmic}[1]
        \Require Data $\{O_1,\ldots,O_n\}$, algorithm to estimate $S_{*,k}$ and $G_{*,k}$ ($k=\varsigma,\ldots,K$), regression algorithm to estimate conditional mean. 
        \State Estimate $S_{*,k}$ and $G_{*,k}$ as in Line~\ref{alg step: est S G} of Algorithm~\ref{alg: SDR}.
        \State Set $\hat{U}_{K}^\SDR=1$.
        \For{$k=K,\ldots,1$}
        \State Compute the pseudo-outcome $\T_{k}( \{\hat{U}_{j}^\SDR, \hat{S}_j, \hat{G}_j\}_{j=k}^K)(O)$
        defined in \eqref{eq: SDR transform} for each individual at risk at visit time $t_k$ ($X>t_k$).
        \If{$k>1$}
            \State Regress $\T_{k}( \{\hat{U}_{j}^\SDR, \hat{S}_j, \hat{G}_j\}_{j=k}^K)(O)$ on $H_{k-1}$ among individuals with $X>t_k$. The fitted predictive model $\hat{U}_{k-1}^\SDR$ is the estimator of $U_{*,k-1}$.
        \Else
            \State Set $\hat{Q}_{0}^\SDR = n^{-1} \sum_{i=1}^n \T_{1}( \{\hat{U}_{j}^\SDR, \hat{S}_j, \hat{G}_j\}_{j=1}^K)(O_i)$.
        \EndIf
        \EndFor
        \State\Return $\hat{Q}_{0}^\SDR$.
    \end{algorithmic}
\end{algorithm}

The sequentially doubly robust result in Theorem~\ref{thm: bound} still applies. Moreover, under the stronger conditions below, the SDR estimator is asymptotically normal and efficient. In contrast, asymptotic normality does not hold for the G-computation or IPCW estimator in general without stringent assumptions on nuisance estimators $(\hat{S}_{k},\hat{G}_{k},\hat{U}_{k}^\SDR)$, namely them being estimated at a parametric rate.

Let $S_k$ and $G_k$ be conditional survival functions---that is, both are non-increasing, non-negative, and right-continuous, and both take value 1 at time $t_k$ given any historical covariates---such that $G_k(t_{k+1} \mid H_k)>0$.
For all $k$ and $t \in (t_k,t_{k+1}]$, define pointwise
\begin{equation}
    \bar{R}_{k,t}(S_k,G_k \mid H_k) := -S_k(t \mid H_k) \int_{(t_k,t]} \frac{G_k(s- \mid H_k)-G_{*,k}(s- \mid H_k)}{G_k(s- \mid H_k)} \left( \frac{S_k-S_{*,k}}{S_k} \right)(\intd s \mid H_k), \label{eq: DR remainder}
\end{equation}
whenever $G_k(t- \mid H_k) > 0$ and zero otherwise.

\begin{condition}[Negligible remainder] \label{condition: negligible remainder}
    For all $k$, $\| \bar{R}_{k,t_{k+1}}(\hat{S}_k,\hat{G}_k \mid \cdot) \|_k = \smallo_p(n^{-1/2})$ and $\| \hat{G}_{k} - G_{*,k} \|_{(t_k,t_{k+1}],k} \| \hat{U}_k^\SDR - \bar{U}_k^\SDR \|_{k+1} = \smallo_p(n^{-1/2})$.
\end{condition}

As we mentioned in Remark~\ref{remark: product remainder}, 
although the remainder $\bar{R}_{k,t}$ is not a product, it might vanish at a rate faster than $n^{-1/2}$ even if both $\hat{S}_{k}$ and $\hat{G}_{k}$ converge slower than $n^{-1/2}$.
The requirement that the product $\| \hat{G}_{k} - G_{*,k} \|_{(t_k,t_{k+1}],k} \| \hat{U}_k^\SDR - \bar{U}_k^\SDR \|_{k+1}$ vanishes at a rate faster than $n^{-1/2}$ is common in non- and semi-parametric inference literature.
Therefore, Condition~\ref{condition: negligible remainder} may be plausible.

We need a few more technical conditions. For any scalar $Q_{0}$, we define function $D(\{U_j, S_j, G_j\}_{j=1}^K, \allowbreak Q_{0}) := \T_{1}( \{U_j, S_j, G_j\}_{j=1}^K) - Q_{0}$.
As we will show, $D(\{U_{*,j}, S_{*,j}, G_{*,j}\}_{j=1}^K, Q_{*,0})$ is the efficient influence function of $\hat{Q}_{0}^\SDR$ under a nonparametric model under conditions.
Recall that we use $\| \cdot \|_1$ to denote the marginal (at the first visit time $t_1=0$) $L^2(P_*)$-norm.

\begin{condition}[Bounded $\hat{U}_k^\SDR$] \label{condition: bounded U}
    For all $k=1,\ldots,K$, $\hat{U}_k^\SDR$ is $P_*$-almost surely bounded by some constant with probability tending to one.
\end{condition}

\begin{condition}[Consistency of influence function] \label{condition: IF consistency}
    $\| \T_1(\{\hat{U}_{j}^\SDR, \hat{S}_{j}, \hat{G}_{j}\}_{j=1}^K) - \T_1(\{U_{*,j}, S_{*,j}, G_{*,j}\}_{j=1}^K) \|_1 \allowbreak = \smallo_p(1)$.
\end{condition}

\begin{condition}[Donsker condition] \label{condition: Donsker}
    The random function $\T_1(\{\hat{U}_{j}^\SDR, \hat{S}_{j}, \hat{G}_{j}\}_{j=1}^K)$ falls in a $P_*$-Donsker class with probability tending to one.
\end{condition}

Condition~\ref{condition: bounded U} is reasonable because $\hat{U}_k^\SDR$ is an estimator of a function bounded by 1.
Condition~\ref{condition: IF consistency} would often hold if nuisance functions are estimated consistently.
By \eqref{eq: SDR U}, Condition~\ref{condition: IF consistency} is anticipated to hold if $(\hat{S}_k,\hat{G}_k)$ is consistent and $\| \hat{U}_k^\SDR - \bar{U}_k^\SDR \|=\smallo_p(1)$ for all $k$, and a formal argument of this statement is given in the proof of Corollary~\ref{coro: SDR} in Supplement~\ref{sec: proof bound}.
Condition~\ref{condition: Donsker} is common in non- and semi-parametric inference literature, and, as usual \protect\citep{Schick1986,newey2018cross}, can be removed by adopting full cross-fitting (see Supplement~\ref{sec: CV}).

The above conditions lead to the following result on $\hat{Q}_{0}^\SDR$ beyond the sequential double robustness shown in Theorem~\ref{thm: bound}.
We say that an estimator $\hat{\theta}$ of $\theta_*$ is asymptotically linear with influence function $f$ if $\hat{\theta} = \theta_* + n^{-1} \sum_{i=1}^n f(O_i) + \smallo_p(n^{-1/2})$, where $\expect_*[f(O)]=0$ and $\expect_*[f(O)^2] < \infty$.

\begin{theorem}[Asymptotic efficiency of $\hat{Q}_0^\SDR$] \label{thm: efficiency}
    Under Conditions~\ref{condition: cond ind cens} and \ref{condition: consistency of strong positivity}, $D(\{U_{*,j}, S_{*,j}, G_{*,j}\}_{j=1}^K, \allowbreak Q_{*,0})$ is the efficient influence function for estimating $Q_{*,0}$ under a nonparametric model.
    Additionally, under Conditions~\ref{condition: negligible remainder}--\ref{condition: Donsker}, $\hat{Q}_0^\SDR$ is an asymptotically efficient estimator of $Q_{*,0}$ under a nonparametric model, that is, it is asymptotically linear with influence function $D(\{U_{*,j}, S_{*,j}, G_{*,j}\}_{j=1}^K, \allowbreak Q_{*,0})$.
    Therefore, $n^{1/2} (\hat{Q}_0^\SDR - Q_{*,0}) \overset{d}{\to} \mathrm{N}(0,\sigma_*^2)$ where $\sigma_*^2 := \expect_* [D(\{U_{*,j}, S_{*,j}, G_{*,j}\}_{j=1}^K, Q_{*,0})(O)^2]$.
\end{theorem}

An asymptotically valid Wald confidence interval for $Q_{*,0}$ can be constructed based on a consistent estimator of $\sigma_*^2$, for example, $n^{-1} \sum_{i=1}^n D(\{\hat{U}_j^\SDR, \hat{S}_j, \hat{G}_j\}_{j=1}^K, \hat{Q}_0^\SDR)(O_i)^2$.
It is also possible to construct an asymptotically valid confidence band over a range of $\tau$ by showing that $\hat{Q}_0^\SDR$, which depends on $\tau$, is uniformly asymptotically linear over a range of $\tau$, under stronger uniform versions of Conditions~\ref{condition: negligible remainder}--\ref{condition: Donsker}.

\section{Simulation} \label{sec: sim}

\subsection{Setting} \label{sec: sim setting}

In this simulation study, we compare the performance of our proposed SDR estimator with some nonparametric competitors. We consider the following data-generating mechanism with $K=2$ and visit times $t_1=0$, $t_2=30$. We generate covariate $\tilde{L}_1$ consisting of $\tilde{L}_{11} \sim \mathrm{N}(0,1)$, $\tilde{L}_{12} \sim \mathrm{Bernoulli}(0.5)$ and $\tilde{L}_{13} \sim \mathrm{N}(0,1)$ at visit time $t_1=0$ independently; we then generate the portion of times to event or censoring in the first time window $(t_1,t_2]$ from
$T_1 \mid \tilde{H}_1 \sim \mathrm{Weibull}(5,30+20*\tilde{L}_{12}+2 \left| L_{11} \right|+L_{13}^2) \wedge (t_2-t_1)$ and $C_1 \mid \tilde{H}_1 \sim \mathrm{Weibull}(4,35+15 L_{12}+0.5 \left| L_{11} \right| L_{12}) \wedge (t_2-t_1)$.
We generate covariates $\tilde{L}_2$ consisting of $\tilde{L}_{21} \sim \mathrm{N}(0,1)$ and $\tilde{L}_{22} \sim \mathrm{Bernoulli}(0.5)$, and then generate times $T_2 \mid \tilde{H}_2,T_1 \sim \ind(T_1=t_2-t_1) \mathrm{Weibull}(3,30+20*\tilde{L}_{22}+2 \left| L_{21} \right|+L_{13}^2)$ and $C_2 \mid \tilde{H}_2,C_1 \sim \ind(C_1=t_2-t_1) \mathrm{Weibull}(4,35+15 L_{22}+0.5 \left| L_{21} \right| L_{22}) \wedge (60-t_2)$.
Here, $\mathrm{Weibull}(a,b)$ stands for the Weibull distribution with shape parameter $a$ and scale parameter $b$, so that the mean is $b \Gamma(1+1/a)$ and the variance is $b^2 \{ \Gamma(1+2/a) - [\Gamma(1+1/a)]^2 \}$.
The total times to event and censoring are $T=T_1+T_2$ and $C=C_1+C_2$, respectively.
The time of follow-up $X=T \wedge C$ and the event indicator $\Delta=\ind(T \leq C)$.
The probability of censoring is around 50\%.
The time of interest is $\tau=60$.
The sample size $n$ ranges over $\{500,1000,2000\}$.
We consider both the marginal survival probability, which equals 0.47, and the conditional survival probability function at $\varsigma=1$ and given $W=L_1$.

When estimating the marginal survival probability, we compare the performance of the following nonparametric methods:
(i) sequentially doubly robust estimator, with nuisance functions estimated consistently (\texttt{SDR}), $(S_{*,k},\bar{U}_k^\SDR)$ estimated inconsistently (\texttt{SDR.SUmis}), $G_{*,k}$ estimated inconsistently (\texttt{SDR.Gmis}), $(G_{*,1},S_{*,2})$ estimated inconsistently (\texttt{SDR.mix1}), or $(S_{*,1},\bar{U}_1^\SDR,G_{*,2})$ estimated inconsistently (\texttt{SDR.mix2}), 
(ii) G-computation estimator, with $(S_{*,k},\bar{U}_k^\SDR)$ estimated consistently (\texttt{Gcomp}) or inconsistently (\texttt{Gcomp.SUmis}), or with the mixed inconsistency as for the SDR estimator (\texttt{Gcomp.mix1} and \texttt{Gcomp.mix2}),
(iii) IPCW estimator, with $G_{*,k}$ estimated consistently (\texttt{IPCW}) or inconsistently (\texttt{IPCW.Gmis}), or with the mixed inconsistency (\texttt{IPCW.mix1} and \texttt{IPCW.mix2}),
(iv) the nonparametric inference method proposed by \protect\citet{Westling2023} as implemented in the R package \texttt{CFsurvival}, and (v) the targeted minimum loss-based estimator proposed by \protect\citet{Benkeser2018} as implemented in the R package \texttt{survtmle}.
We round all times when using \texttt{survtmle} because this method can only handle discrete times.
\texttt{SDR.mix1} and \texttt{SDR.mix2} correspond to the inconsistent estimation of outcome models in one time window and the censoring model in another time window, a scenario that can only be handled by SDR estimators rather than conventional doubly robust estimators.
We adopt cross-fitting for SDR, G-computation and IPCW estimators, as described in Supplement~\ref{sec: CV}, and use the default setting for \texttt{CFsurvival} and \texttt{survtmle}.
We compare the sampling distributions of all these estimators for the marginal survival probability $Q_{*,0}$, and compare the coverage of 95\%-Wald confidence interval (CI) for the sequentially doubly robust estimators, \texttt{CFsurvival} and \texttt{survtmle}.
For the conditional survival probability function $\beta_*$, we compare the $L^2(P_*)$-distance (based on an independent sample of size $10^6$) for all these methods except \texttt{CFsurvival} and \texttt{survtmle} because these two methods cannot estimate a conditional survival probability function.

In all the above methods, we estimate nuisance survival functions such as $S_{*,k}$ and $G_{*,k}$ by survival Super Learner \protect\citep{Westling2023} with a library consisting of the Kaplan-Meier estimator \protect\citep{Kaplan1958}, the Cox proportional hazard model \protect\citep{Andersen1982,Cox1972} in conjunction with the Breslow estimator \protect\citep{Breslow1972}, Weibull regression \protect\citep{Wei1992}, and random survival forests \protect\citep{Ishwaran2008} with various tuning parameters when we wish to estimate these functions consistently. When we estimate a nuisance survival function inconsistently, we use the Kaplan-Meier estimator, which is inconsistent because time-varying covariates are informative of censoring.
When running ordinary regressions to estimate functions such as $\bar{U}_k^\SDR$ and conditional survival function $\beta_*$, we use Super Learner \protect\citep{VanderLaan2007} with a library consisting of gradient boosting \protect\citep{Mason1999,Friedman2001,Friedman2002,Chen2016} with various tuning parameters, as well as linear models with and without interactions.
When we estimate $\bar{U}_k^\SDR$ inconsistently, we use a main effects linear model.

\subsection{Results} \label{sec: sim results}

Figures~\ref{fig: margp dist} and \ref{fig: margp CI} contain results for the marginal survival probability.
For better readability, \texttt{survtmle} is excluded in similar figures, Figures~\ref{fig: margp dist2} and \ref{fig: margp CI2}, in the Supplemental Material.
\texttt{survtmle} appears biased and inconsistent. \texttt{CFsurvival} has larger bias than G-computation, IPCW and SDR estimators with consistent nuisance function estimators.
Their poor performance in this setting is expected because both rely on independent censoring conditional on baseline covariates alone, which does not hold in this example.
G-computation, IPCW and SDR estimators with consistent nuisance function estimators all appear consistent, but the SDR estimator appears to be the least biased among all methods.
When the set of nuisance estimators of time-to-event outcomes or censoring is inconsistent in both time windows, or in the two mixed inconsistency cases, G-computation and IPCW estimators also appear biased and inconsistent, but the SDR estimator still appears consistent.
In terms of Wald CI coverage, the coverage based on the SDR estimator with consistent nuisance estimators appears to be the closest to the nominal coverage, while all others appear anti-conservative, at least in large samples ($n=2000$).
When $(\hat{S}_{k},\hat{U}_k^\SDR)$ or $\hat{G}_{k}$ is inconsistent, the SDR estimator achieves better CI coverage than \texttt{CFsurvival} and \texttt{survtmle} in this setting, but the CI coverage appears to be lower than when nuisance functions estimators are consistent.
Therefore, when estimating the marginal survival probability $Q_{*,0}$, the SDR estimator appears to be sequentially doubly robust against inconsistent nuisance estimators as shown in Corollary~\ref{coro: SDR}, and asymptotically normal with valid Wald CI coverage when nuisance function estimators are consistent as shown in Theorem~\ref{thm: efficiency}, but other methods may not possess such property.

\begin{figure}[h!tb]
    \centering
    \includegraphics[width=\linewidth]{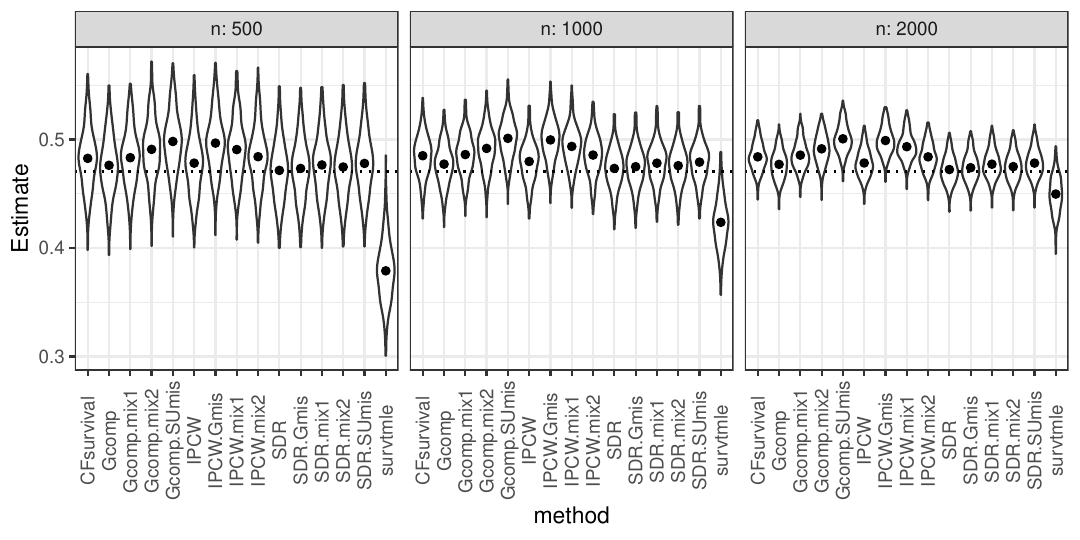}
    \caption{Sampling distribution of estimators of marginal survival probability. The points in the middle of each violin are the estimated means of the estimators. The horizontal dotted line is the truth.}
    \label{fig: margp dist}
\end{figure}

\begin{figure}[h!tb]
    \centering
    \includegraphics[width=\linewidth]{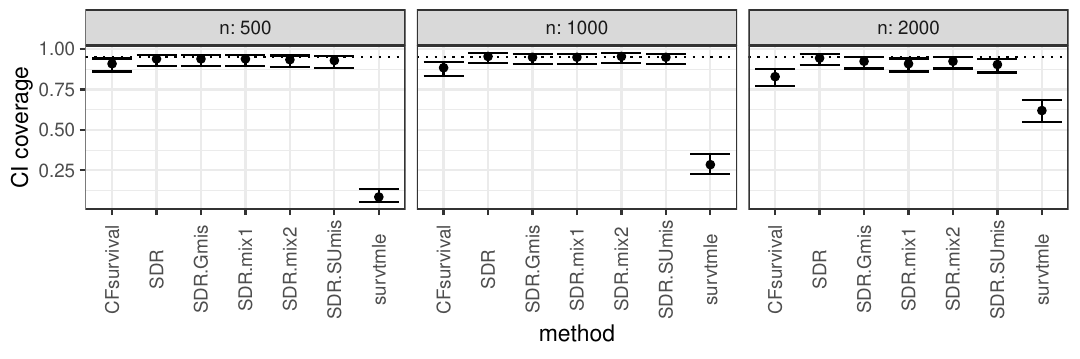}
    \caption{Estimated coverage of 95\%-Wald confidence interval (CI) for marginal survival probability, along with error bars representing 95\% confidence interval for the coverage. The horizontal dotted line is the nominal coverage 95\%.}
    \label{fig: margp CI}
\end{figure}

The $L^2(P_*)$-distance of the estimated conditional survival probability functions is presented in Fig.~\ref{fig: condp L2}.
For better readability, the G-computation and IPcW estimators with at least one relevant inconsistent nuisance estimator are excluded in a similar figure, Fig.~\ref{fig: condp L22}, in the Supplemental Material.
When all nuisance estimators are consistent, the G-computation estimator appears to be closest to the truth on average, but IPCW and SDR estimators do not appear substantially worse, especially in large samples ($n=2000$).
However, when at least one relevant nuisance estimator is inconsistent, the G-computation and IPCW estimators can have a substantially larger $L^2(P_*)$-distance from the truth; in contrast, the SDR estimator still appears to be consistent.
Therefore, when estimating the conditional survival probability function, the SDR estimator appears to be sequentially doubly robust against inconsistent nuisance estimators, and to still perform reasonably well when nuisance estimators are all consistent.

\begin{figure}[h!tb]
    \centering
    \includegraphics[width=\linewidth]{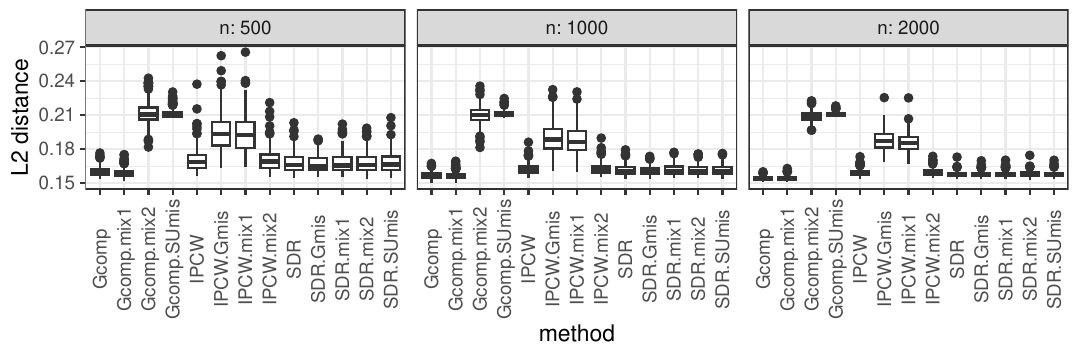}
    \caption{Sampling distribution of the $L^2(P_*)$-distance between the estimated conditional survival probability function and the truth.}
    \label{fig: condp L2}
\end{figure}

\section{Application to a COVID-19 vaccine efficacy trial} \label{sec: vac trial}

We apply the proposed methods to analyze the global phase~3 efficacy trial VAT00008 Stage 2 (NCT04904549), conducted in Colombia, Mexico, Kenya, Uganda, Ghana, India, and Nepal, which evaluated a Spike SARS-CoV-2 recombinant protein vaccine with AS03-adjuvant containing two Spike strains (Index/Ancestral and Beta/B.1.351) vs. placebo administered on Day~1 and Day~22 \protect\citet{Dayan2023}. The primary endpoint was virologically-confirmed SARS-CoV-2 infection with symptoms of COVID-19-like illness starting 14 days after the second injection, the so-called COVID-19 primary endpoint.
The trial enrolled 13,002 participants between October 19, 2021, and February 15, 2022, evaluating 11,416 adult participants for efficacy in the primary analysis cohort with 9,693 (84.9\%) baseline seropositive for SARS-CoV-2 (indicating previously infected) and the remainder baseline seronegative (naive to SARS-CoV-2).
Focusing on the baseline seropositive cohort given it is most relevant for representing contemporary epidemiology, our objective is to study the association between two biomarkers measuring SARS-CoV-2 antibody response to prior SARS-CoV-2 infection and/or vaccination with the cumulative incidence of the COVID-19 primary endpoint. These biomarkers are $\log 10$ concentration of IgG binding antibody against the Omicron-strain SARS-CoV-2 Spike protein and the $\log 10$ neutralizing antibody titer against the SARS-CoV-2 BA.1 strain.
Similar biomarkers have been studied in other COVID-19 vaccine biomarker correlates studies \protect\citep[e.g.,][]{gilbert2022covid, zhang2024omicron}. 

Baseline covariates that could confound the association of the biomarkers with COVID-19 include sex, force of infection score that estimates from external data bases the amount of SARS-CoV-2 transmission happening in a participant’s local geography and time of follow-up, and standardized baseline risk score. Time-varying covariates are the two antibody biomarkers noted above measured at both the Day~22 and Day~43 visits.

Participants were actively followed to observe the time to the COVID-19 endpoint or right-censoring, measured in days.
To spare costs, a case-cohort sampling design was used to select a subset of participants for measuring the biomarkers at both time points, and we applied inverse probability of sampling weighting in the analysis.
Among the baseline seropositive participants used in the analysis, a total of 146 COVID-19 endpoints were observed by 365 days post Day~22 visit.
The goal is, for each antibody biomarker, to study the association between the Day~22 antibody level and COVID-19 cumulative incidence through several time points while accounting for baseline covariates.
The follow-up of participants was right-censored by loss to follow-up and by outside vaccination with a non-study vaccine (typically mRNA vaccination), which boosts the antibody marker levels not due to the recombinant protein vaccination in the trial. The Day~43 antibody data can predict outside vaccination. Therefore, we also need to account for potential informative censoring due to the Day~43 antibody level.

Put in the notation in this paper, $K=2$, the time origin $t_1=0$ is the Day~22 visit, $t_2$ corresponds to measuring covariates at the Day~43 visit, $W=L_1$ consists of baseline covariates, treatment arm (vaccine vs. placebo) and Day~22 post-treatment antibody level, and $L_2$ consists of Day~43 antibody level.
In the original analysis, COVID-19 endpoints within 6 days of the Day~43 visit may have been caused by SARS-CoV-2 infections before the Day~43 visit, which could perturb the Day~43 marker values, such that these endpoints were excluded from the analysis.
Therefore, in this analysis, we set $t_2=43+6-22=27$ days to correspond to 6 days after the Day~43 visit and include these endpoints.

The time points of interest $\tau$ are $t_2$ as well as 90, 135, 180, and 365 days after the Day~22 visit.
In addition to the survival probability, we also estimate a ``controlled direct effect'' (CDE) of the vaccine on the COVID-19 endpoint not through the Day~22 antibody level. Let $w:=(w_a, w_v, w_c)$ denote a covariate vector with $w_a$ denoting the Day~22 antibody level, $w_v$ being the indicator of vaccination (vs. placebo), and $w_c$ denoting baseline covariates. Consider two covariate vectors $(w_a,1,w_c)$ and $(w_a,0,w_c)$, differing only in the treatment arm component, such that the Day~22 antibody level $w_a$ is in the common support of the two treatment arms. We consider the CDE on COVID-19 incidence defined as $\CDE(w_a,w_c) := \{1-\beta_{*}(w_a,1,w_c)\} - \{1-\beta_{*}(w_a,0,w_c)\}$.
We do not necessarily investigate causal assumptions needed for a causal interpretation of this CDE, but regard this CDE as a measure of the association between the treatment arm and incidence after accounting for Day~22 post-treatment antibody level and other baseline covariates.
Under sufficient causal assumptions, the CDE has a causal interpretation and a Day~22 antibody marker is considered a good surrogate endpoint if $\CDE(w_a,w_c)$ is approximately zero for a range of $w_a$ values, as $\CDE(w_a,w_c)$ near zero means that a large proportion of the total vaccine efficacy is eliminated by setting the antibody marker to the same value for both treatment arms \protect\citep{Joffe2009,Robins1992}.
We also estimate the marginal survival probabilities $Q_{*,0}$ by treatment arm while accounting for potential informative censoring explained by each of the two antibody biomarkers at the Day~43 visit, and subsequently estimate the vaccine's additive effect on the COVID-19 cumulative incidence.

We adopt 20-fold cross-fitting.
We estimate nuisance survival functions by survival Super Learner with a library consisting of the Kaplan-Meier estimator, the Cox proportional hazard model, piecewise constant hazard models with various numbers of breaks \protect\citep[e.g.,][]{Andersen2021}, and random survival forests with various tuning parameters.
When estimating $U_{*,k}$ and $\beta_*$ via regression, we use Super Learner with a library consisting of linear models with and without interactions.
We use parametric linear models to avoid overfitting.
After running ordinary regressions with pseudo-outcomes, we additionally project estimates of $\beta_{*}(H_\varsigma)$ onto the unit interval and run an isotonic regression over $\tau$ to ensure that the estimates of $\beta_{*}(H_\varsigma)$ across $\tau$ are non-increasing in $\tau$. \protect\citet{Westling2020} justified such post-hoc use of isotonic regression.
We estimate CDE by plugging in the estimates of $\beta_{*}$.

The estimated additive effects of the vaccine versus placebo on the marginal cumulative incidence are presented in Table~\ref{tab: marg nAb}. The vaccine appears protective, consistent with the original analysis.
The vaccine effect is significant at $\tau=27, 90, 135$ but not at $\tau=180,365$.
This loss of significance likely reflects increased variance at later times due to a diminishing number of informative observations. The point estimates indicate that the additive-difference vaccine effect follows a non-linear trend---initially increasing as the immune response matures before undergoing a gradual attrition over the long term.
Fig.~\ref{fig: surv nAb} presents the estimated survival probability over a range of Day~22 neutralizing antibody levels.
The overall cumulative incidence is low ($<2\%$). Males (Sex=1) have lower survival curves, namely higher incidence.
According to the SDR estimator, the Day~22 neutralizing antibody level is slightly positively correlated with survival at all times.
The estimates from G-computation and SDR estimation are somewhat similar and different from IPCW estimation. The reason might be that the random variable in the mean in Corollary~\ref{coro: IPCW identify} is non-zero only when that individual is still at risk at $t_2$ ($X>t_2$), so the IPCW estimator relies more on incidence after $t_2$ but does not fully leverage information in the survival function $S_{*,1}$ in $(t_1,t_2]$, and may thus be less stable than G-computation and SDR estimators when $\hat{S}_1$ is close to $S_{*,1}$.

Fig.~\ref{fig: CDE nAb} presents the estimated CDE curve for the neutralization antibody titer biomarker, the difference between vaccine and placebo curves in Fig.~\ref{fig: surv nAb}.
All CDE curves are close to zero, suggesting that the vaccine's effect can be well explained by its effect on neutralizing antibodies.
CDEs are negative for most values of Day~22 neutralizing antibody levels except for the SDR estimators for males (Sex=1).
If we adopt a causal interpretation of CDE, this suggests the
possible existence of other immune functions not captured by neutralizing antibodies or a lack of sensitivity of the neutralizing antibody assay.
In the exceptional case, CDEs are positive and further from zero for larger $\tau$.
This is reasonable because the quality of the surrogate (Day~22 neutralizing antibody level) is expected to weaken over time, and there may be other immune functions not captured by neutralizing antibodies with a longer-term effect.
However, this study may be underpowered to detect non-zero CDE, and the conclusion above should be considered exploratory.

\begin{table}[h!tb]
    \centering
    \caption{SDR estimate (S.E.) of the vaccine's additive effect versus placebo on the cumulative incidence of the COVID-19 endpoint by time $\tau$, multiplied by 100, accounting for potential informative censoring due to covariates and neutralizing antibody biomarker measured on Days~22 and 43.}
    \label{tab: marg nAb}
    \begin{tabular}{c|r|r|r|r|r}
    \hline\hline
         Time $\tau$ & 27 & 90 & 135 & 180 & 365 \\
         \hline
         Est (S.E.) & -0.64 (0.21) & -0.96 (0.31) & -1.00 (0.37) & -0.85 (0.43) & -0.57 (0.52) \\
    \hline\hline
    \end{tabular}
\end{table}

\begin{figure}[h!tb]
    \centering
    \includegraphics[width=\linewidth]{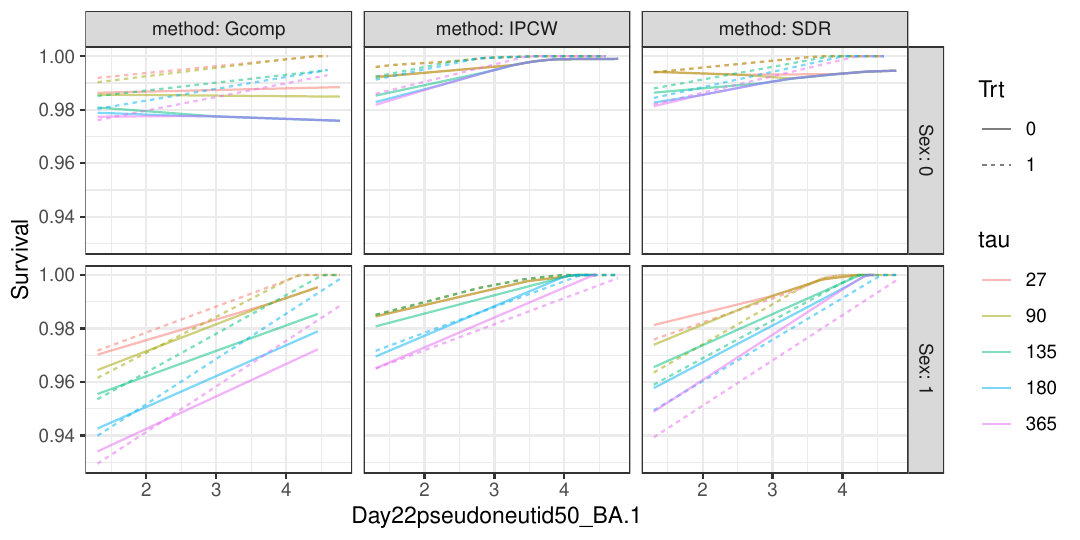}
    \caption{Estimated conditional survival probability $\beta_{*}(w)$ over the observed range of Day~22 neutralizing antibody level (\texttt{Day22pseudoneutid50\textunderscore BA.1}) and a range of time points of interest $\tau$. The range of antibody levels is stratified by sex and treatment arm (\texttt{Trt}). Force of infection score and standardized baseline risk score are fixed at the sample mean stratified by sex. Sex is coded as 0=female and 1=male.}
    \label{fig: surv nAb}
\end{figure}

\begin{figure}[h!tb]
    \centering
    \includegraphics[width=\linewidth]{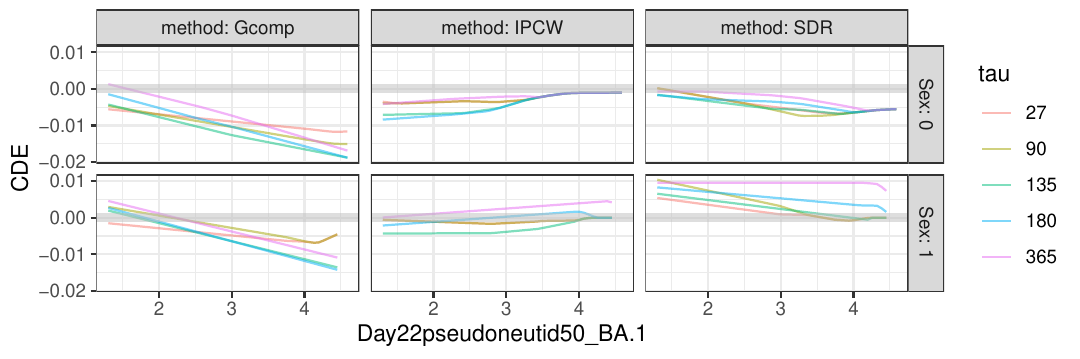}
    \caption{Estimated controlled direct effect (CDE) over the observed common range of Day~22 neutralizing antibody level (\texttt{Day22pseudoneutid50\textunderscore BA.1}) shared by the vaccine and placebo arms and a range of time points of interest $\tau$. The range of antibody levels is stratified by sex. Force of infection score and standardized baseline risk score are fixed at the sample mean stratified by sex.}
    \label{fig: CDE nAb}
\end{figure}

We also estimate the log multiplicative CDE that measures controlled direct effects in terms of multiplicative vaccine-reduction in risk, and found similar results.
The results for the IgG Spike binding antibody biomarker are similar to those for the neutralizing antibody biomarker. These additional analysis results can be found in Supplement~\ref{sec: vac trial2}.

\section{Discussion} \label{sec: discussion}

The statistical problem we study in this paper does not involve non-randomized treatment assignment and the induced causal inference problem. Therefore, our methods cannot be directly applied to observational studies with confounders to study causal effects.
Adding inverse propensity score weighting \protect\citep[e.g.,][]{Hubbard2000,VanderLaan2002,Westling2023} would provide one approach.
Our methods may help scientists investigate treatment mechanisms on continuous time-to-event outcomes in the future.

\section*{Acknowledgement}

This project has been funded in whole or in part with federal funds from the National Institutes of Health under award numbers DP2-LM013340 and UM1AI068635 from the National Institute of Allergies and Infectious Diseases, National Institutes of Health.  
The VAT00008 data were generated from federal funds from the Department of Health and Human Services and from Sanofi.
The content is solely the responsibility of the authors and does not necessarily represent the official views of the National Institutes of Health. 
We thank Larry Han and Yuyao Wang for providing helpful feedback on an earlier version of this manuscript.

\clearpage

\bibliographystyle{chicago}
\bibliography{reference}

\clearpage

\setcounter{page}{1}
\setcounter{section}{0}
\renewcommand{\thepage}{S\arabic{page}}%
\renewcommand{\thesection}{S\arabic{section}}%
\renewcommand{\theHsection}{S\thesection}%
\setcounter{table}{0}
\renewcommand{\thetable}{S\arabic{table}}%
\renewcommand{\theHtable}{S\thetable}%
\setcounter{figure}{0}
\renewcommand{\thefigure}{S\arabic{figure}}%
\renewcommand{\theHfigure}{S\thefigure}%
\setcounter{equation}{0}
\renewcommand{\theequation}{S\arabic{equation}}%
\renewcommand{\theHequation}{S\theequation}%
\setcounter{condition}{0}
\renewcommand{\thecondition}{S\arabic{condition}}%
\renewcommand{\theHcondition}{S\thecondition}%
\setcounter{lemma}{0}
\renewcommand{\thelemma}{S\arabic{lemma}}%
\renewcommand{\theHlemma}{S\thelemma}%
\setcounter{theorem}{0}
\renewcommand{\thetheorem}{S\arabic{theorem}}%
\renewcommand{\theHtheorem}{S\thetheorem}%
\setcounter{corollary}{0}
\renewcommand{\thecorollary}{S\arabic{corollary}}%
\renewcommand{\theHcorollary}{S\thecorollary}%
\setcounter{algorithm}{0}
\renewcommand{\thealgorithm}{S\arabic{algorithm}}%
\renewcommand{\theHalgorithm}{S\thealgorithm}%

\begin{center}
    \LARGE Supplement to ``\ourtitle''
\end{center}

\section{G-computation and IPCW estimators} \label{sec: G IPCW alg}

The following Algorithms~\ref{alg: Gcomp} and \ref{alg: IPCW} describe G-computation and IPCW estimators motivated by Theorem~\ref{thm: G identify} and Corollary~\ref{coro: IPCW identify}, respectively.

\begin{algorithm}
    \caption{G-computation estimator of $\beta_{*}$} \label{alg: Gcomp}
    \begin{algorithmic}[1]
        \Require Data $\{O_1,\ldots,O_n\}$, algorithm to estimate $S_{*,k}$ ($k=\varsigma,\ldots,K$).
        \State Estimate $S_{*,k}$ as in Line~\ref{alg step: est S G} of Algorithm~\ref{alg: SDR}.
        \State Set $\hat{U}_{K}^\G =1$.
        \For{$k=K,\ldots,\varsigma$}
        \State Compute the pseudo-outcome
        \begin{equation}
            Y^\G_{k} := \hat{S}_{k}(t_{k+1} \mid H_k) \hat{U}_{k}^\G(H_k) \label{eq: G pseudo-outcome}
        \end{equation}
        for each individual at risk at visit time $t_k$ ($X>t_k$).
        \If{$k>\varsigma$}
            \State Regress $Y^\G_{k}$ on $H_{k-1}$ among individuals with $X>t_k$. The fitted predictive model $\hat{U}_{k-1}^\G$ is the estimator of $U_{*,k-1}$.
        \Else
            \State Regress $Y^\G_{k}$ on $H_k$ among individuals with $X>t_k$. The fitted predictive model $\hat{Q}_{k}^\G$ is the estimator of $Q_{*,k}$.
        \EndIf
        \EndFor
        \State\Return $\hat{Q}_{\varsigma}^\G$.
    \end{algorithmic}
\end{algorithm}

\begin{algorithm}
    \caption{IPCW estimator of $\beta_{*}$} \label{alg: IPCW}
    \begin{algorithmic}[1]
        \Require Data $\{O_1,\ldots,O_n\}$, algorithm to estimate $G_{*,k}$ ($k=\varsigma,\ldots,K$).
        \State Estimate $G_{*,k}$ as in Line~\ref{alg step: est S G} of Algorithm~\ref{alg: SDR}.
        \State Compute the pseudo-outcome
        \begin{equation}
            Y^\IPCW_{\varsigma} := \ind(X>t_K) \left\{ \prod_{j=\varsigma}^{K-1} \frac{1}{\hat{G}_j(t_{j+1} \mid H_j)} \right\} \left\{ 1 - \frac{\ind(X \leq \tau) \Delta}{\hat{G}_K(X- \mid H_K)} \right\} \label{eq: IPCW pseudo-outcome}
        \end{equation}
        for each individual at risk at visit time $t_\varsigma$ ($X>t_\varsigma$).
        \State Regress $Y^\IPCW_{\varsigma}$ on $H_\varsigma$ among individuals with $X>t_\varsigma$. The fitted predictive model $\hat{Q}_{\varsigma}^\IPCW$ is the estimator of $Q_{*,\varsigma}$. \label{alg step: IPCW regress}
        \State\Return $\hat{Q}_{\varsigma}^\IPCW$.
    \end{algorithmic}
\end{algorithm}

We formally define the oracle estimators for Algorithms~\ref{alg: Gcomp} and \ref{alg: IPCW} that are used in Theorem~\ref{thm: bound}:
\begin{align}
    &\bar{U}_{K}^\G := 1, \nonumber \\
    &\bar{U}_{k}^\G: h_k \mapsto \expect_*[Y^\G_{k} \mid X>t_{k+1}, H_k=h_k], \quad (k=K-1,\ldots,\varsigma) \nonumber \\
    &\bar{Q}_{\varsigma}^\G: h_\varsigma \mapsto \expect_*[Y^\G_{\varsigma} \mid X>t_\varsigma, H_\varsigma=h_\varsigma] = \hat{S}_{\varsigma}(t_{\varsigma+1} \mid h_\varsigma) \hat{U}_{\varsigma}^\G(h_\varsigma), \nonumber \\
    &\bar{Q}_{\varsigma}^\IPCW: h_\varsigma \mapsto \expect_*[Y^\IPCW_{\varsigma} \mid X>t_\varsigma, H_\varsigma=h_\varsigma]. \label{eq: G IPCW oracle}
\end{align}

\section{Cross-fitting} \label{sec: CV}

One potential issue with Algorithms~\ref{alg: SDR}, \ref{alg: Gcomp} and \ref{alg: IPCW} is that, when computing the pseudo-outcomes, the nuisance function estimators are evaluated over the same entire data as the data point $O$, which may lead to overfitting.
We illustrate with the SDR estimator.
Note that $\T_k$ can be equivalently defined recursively from $k=K$ to $k=1$ by
\begin{align}
    \T_{K}(\{U_K,S_K,G_K\})(O) &= \TS_{K,t_{K+1}}(S_K,G_K)(X,\Delta \mid H_K), \nonumber \\
    \T_{k}( \{U_j, S_j, G_j\}_{j=k}^K)(O) &= \frac{\ind(X>t_{k+1})}{G_k(t_{k+1} \mid H_k)} \{ \T_{k+1}( \{U_j, S_j, G_j\}_{j=k+1}^K)(O) - U_k(H_k) \} \nonumber \\
    &\quad + U_k(H_k) \TS_{k,t_{k+1}}(S_k,G_k)(X,\Delta \mid H_k). \label{eq: SDR transform recursive}
\end{align}
Without sample-splitting, $\hat{U}_{k}^\SDR$ is fit to the pseudo-outcome $\T_{k+1}( \{\hat{U}_j^\SDR, \hat{S}_j, \hat{G}_j\}_{j=k+1}^K)(O)$ for $O$ over the entire data, so the magnitude of the oracle residual $\T_{k+1}( \{\hat{U}^\SDR_j, \hat{S}_j, \hat{G}_j\}_{j=k+1}^K)(O) - \bar{U}_{k}^\SDR(H_k)$ may be underestimated when evaluating $\T_{k}( \{\hat{U}_j^\SDR, \hat{S}_j, \hat{G}_j\}_{j=k}^K)(O)$.
In the extreme case where $\hat{U}_{k}^\SDR$ interpolates the data---for example, regression is done via a deep neural network---even if the estimator maintains good out-of-sample performance, all residuals are zero and do not resemble the oracle residual $\T_{k+1}( \{\hat{U}^\SDR_j, \hat{S}_j, \hat{G}_j\}_{j=k+1}^K)(O) - \bar{U}_{k}^\SDR(H_{k})$.

When the regression estimator is an empirical risk minimizer over a function class, it is often required that the function class has restricted complexity, namely $P_*$-Donsker, to ensure that the regression estimator is close to the corresponding oracle estimator.
To completely drop the Donsker condition, a full cross-fitting strategy similar to \protect\citetsupp{Bodory2022} can be adopted: Randomly split the data into $2(K-\varsigma+1)$ folds of equal size, use the first fold to estimate $(S_{*,K},G_{*,K})$, use the empirical distribution in the second fold to estimate $U_{*,K-1}$, use the third fold to estimate $(S_{*,K-1},G_{*,K-1})$, use the empirical distribution in the fourth fold to estimate $U_{*,K-2}$, \textellipsis, use the $(2K-2\varsigma+1)$th fold to estimate $(S_{*,\varsigma},G_{*,\varsigma})$, and finally use the empirical distribution in the $(2K-2\varsigma+2)$th fold to estimate $Q_{*,\varsigma}$.
Then we may swap the roles of these folds and use the average of $Q_{*,\varsigma}$ estimators over all folds as the final estimator, so that the entire data is used for estimating $Q_{*,\varsigma}$.

However, this full cross-fitting approach may still be suboptimal in practice because, for each fold, each nuisance function estimator is computed using merely $n/(2K-2\varsigma+2)$ individuals.
With $K=2$ and $\varsigma=1$, this number is $n/4$, and can drop dramatically as $K$ increases.
Such a training data size might be too small with a moderate total sample size $n$ such as $\leq 1000$, because (i) the outcomes are right-censored and may be relatively rare in each time window $(t_k,t_{k+1}]$, and (ii) each regression only uses a subset of data ($X>t_k$), so the actual training data size can be much smaller than the fold size, particularly for large $k$ under heavy censoring.
This training data size issue is more concerning if cross-validation is adopted in regression (e.g., Super Learner) and may yield highly unstable nuisance function estimators.

Since a major issue with not splitting data is potential over-fitting, we adopt a partial cross-fit strategy that avoids such over-fitting in all regressions and uses more data to estimate nuisance functions.
Essentially, for each regression to estimate $U_{*,k}$ or $Q_{*,\varsigma}$, we compute the pseudo-outcome in a sample-splitting fashion, that is, we compute and evaluate nuisance estimators in different folds of the data.
With data split into $M$ folds of equal size, the training data size for estimating each nuisance function is $n (M-1)/M$ rather than $n/(2K-2\varsigma+2)$ (without accounting for the sample size reduction due to $X>t_k$).
Hence, the training data size is at least $n/2$ and may substantially increase with a larger $M$.
We describe our partially cross-fit version of Algorithm~\ref{alg: SDR} in the following Algorithm~\ref{alg: CV SDR}.
We drop the superscript $\SDR$ from the notation for conciseness. For each individual $i$, we use $H_{k,i}$ to denote the covariate $H_k$ of individual $i$.
We also adopt a similar partial cross-fitting for Algorithms~\ref{alg: Gcomp} and \ref{alg: IPCW}.

\begin{algorithm}[htb]
    \caption{Partially cross-fit sequentially doubly robust estimator of $\beta_{*}$} \label{alg: CV SDR}
    \begin{algorithmic}[1]
        \Require Data $\{O_1,\ldots,O_n\}$, algorithm to estimate survival functions $S_{*,k}$ and $G_{*,k}$ ($k=\varsigma,\ldots,K$), regression algorithm to estimate conditional mean, number $M \geq 2$ of folds.
        \State Randomly split data into $M$ folds of equal size. Let $m_i$ denote the fold containing this individual $i$ ($i=1,\ldots,n$).
        \State For each fold $m=1,\ldots,M$ and each $k=\varsigma,\ldots,K$, use data out of fold $m$ to estimate $(S_{*,k},G_{*,k})$ as in Line~\ref{alg step: est S G} of Algorithm~\ref{alg: SDR} and denote the estimator by $(\hat{S}_k^{(-m)},\hat{G}_k^{(-m)})$.
        \State Set $\hat{U}_{K}^{(-m)} =1$ for all folds $m=1,\ldots,M$.
        \For{$k=K,\ldots,\varsigma$}
        \State For each individual $i$ such that $X_i > t_k$, compute the pseudo-outcome $Y_{k,i} := \T_{k}( \{\hat{U}_{j}^{(-m_i)}, \hat{S}_j^{(-m_i)}, \hat{G}_j^{(-m_i)}\}_{j=k}^K)(O_i)$.
        \If{$k>\varsigma$}
            \State For each fold $m$, among individuals $i$ out of fold $m$ such that $X_i > t_k$, regress $Y_{k,i}$ on $H_{k-1,i}$. Denote the fitted predictive model by $\hat{U}_{k-1}^{(-m)}$.
        \Else
            \State Among all individuals $i$ such that $X_i > t_k$, regress $Y_{k,i}$ on $H_{k,i}$. Denote the fitted predictive model by $\hat{Q}_{k}$.
        \EndIf
        \EndFor
        \State\Return $\hat{Q}_{\varsigma}$.
    \end{algorithmic}
\end{algorithm}

Our partial cross-fitting keeps subtle dependence among nuisance estimators and the empirical distribution. Strictly speaking, our partial cross-fitting cannot drop Donsker conditions, even if $\hat{Q}_\varsigma$ is obtained by averaging $M$ (partial) sample-split estimators of $Q_{*,\varsigma}$ in the last step of Algorithm~\ref{alg: CV SDR}.
Consider the case where $\varsigma=1$, $K=2$ and $M=2$ as an example.
We use fold~2 to estimate $(\hat{S}_2^{(-1)},\hat{G}_2^{(-1)})$, and use pseudo-outcomes in fold~1 to estimate $\hat{U}_1^{(-2)}$.
Thus, the pseudo-outcomes $Y_{2,i}$ in fold~1 depend on the entire data, and so does $\hat{U}_1^{(-2)}$.
The same argument with the two folds' roles swapped yields that $\hat{U}_1^{(-1)}$ also depends on the entire data.
Since the pseudo-outcome for estimating $Q_{*,1}$ must depend on $\hat{U}_1^{(-1)}$ or $\hat{U}_1^{(-2)}$, the empirical distribution and the nuisance functions in this regression must be dependent.
Therefore, Donsker conditions cannot be dropped by the usual argument of conditioning on some data \protect\citepsupp[e.g.,][]{Chernozhukov2018}.
In contrast, in the aforementioned full cross-fitting, a separate fold is always held out to evaluate nuisance functions, and so the empirical distribution and the nuisance functions in each regression are independent.
Still, our estimator in Algorithm~\ref{alg: CV SDR} performs well in the simulation in Section~\ref{sec: sim} even though we use flexible machine learning algorithms such as random survival forest and gradient boosting.

\section{Product bias remainder for discrete times} \label{sec: product remainder}

The terms of the form $\| \hat{S}_{k} - S_{*,k} \|_{(t_k,t_{k+1}],k} \wedge \| \hat{G}_{k} - G_{*,k} \|_{(t_k,t_{k+1}],k}$ in the bound \eqref{eq: SDR point bound} arises from bounding $\bar{R}_{k,t_{k+1}}$ defined in \eqref{eq: DR remainder}.
In this section, we study the form of the remainder $\bar{R}_{k,t_{k+1}}$ when both times to event and censoring are discrete.
Without loss of generality, we consider $k=K=1$. We assume that the support of $T$ and $C$ is contained in $\{t^j: j=1,\ldots,J+1\}$ where $t^0:=0<t^1<t^2<\ldots<t^J < t^{J+1}:=\infty$ and the time of interest $\tau = t_2 \in [t^J, \infty)$. 
The subscript $k=1$ as well as the subscript $t_2$ of $\bar{R}$ is dropped in this section.
The assumption that $t^J \leq t_2$ is innocuous since only times before $t_2$ need to be considered.
Let $\Lambda_*$ be the true cumulative hazard function and $\hat{\Lambda}$ be the estimated cumulative hazard function associated with $\hat{S}$. We use $\lambda_*$ and $\hat{\lambda}$ to denote the hazard function associated with $\Lambda_*$ and $\hat{\Lambda}$, respectively. Then it holds that
$$S_*(t \mid h) = \prod_{t^j \leq t} (1-\lambda_*(t^j \mid h)), \quad \hat{S}(t \mid h) = \prod_{t^j \leq t} (1-\hat{\lambda}(t^j \mid h)),$$
and consequently
\begin{align*}
    \bar{R}(\hat{S},\hat{G} \mid h)
    &:= -\hat{S}(t_2 \mid h) \int_{(0,t_2]} \frac{\hat{G}(s- \mid h) - G_*(s- \mid h)}{\hat{G}(s- \mid h)} \left( \frac{\hat{S}-S_*}{\hat{S}} \right)(\intd s \mid h) \\
    &\ = -\hat{S}(t_2 \mid h) \sum_{j=1}^J \frac{\hat{G}(t^{j-1} \mid h) - G_*(t^{j-1} \mid h)}{\hat{G}(t^{j-1} \mid h)} \\
    &\ \quad\times \left\{ \frac{\hat{S}(t^{j-1} \mid h)[1-\hat{\lambda}(t^j \mid h)] - S_*(t^{j-1} \mid h)[1-\lambda_*(t^j \mid h)]}{\hat{S}(t^{j-1} \mid h)[1-\hat{\lambda}(t^j \mid h)]} - \frac{\hat{S}(t^{j-1} \mid h) - S_*(t^{j-1} \mid h)}{\hat{S}(t^{j-1} \mid h)} \right\} \\
    &\ = \hat{S}(t_2 \mid h) \sum_{j=1}^J \frac{\hat{G}(t^{j-1} \mid h) - G_*(t^{j-1} \mid h)}{\hat{G}(t^{j-1} \mid h)} \frac{S_*(t^{j-1} \mid h) [\hat{\lambda}(t^j \mid h) - \lambda_*(t^j \mid h)]}{\hat{S}(t^j \mid h)}.
\end{align*}
Under Condition~\ref{condition: consistency of strong positivity}, since $G_*,\hat{G},S_*,\hat{S},\lambda_*,\hat{\lambda} \in [0,1]$, we may apply Cauchy-Schwarz inequality to show that
\begin{align*}
    | R(\hat{S},\hat{G} \mid h) | &\lesssim \left\{ \sum_{j=1}^J \left[ \hat{G}(t^{j-1} \mid h) - G_*(t^{j-1} \mid h) \right]^2 \right\}^{1/2} \left\{ \sum_{j=1}^J \left[ \hat{\lambda}(t^j \mid h) - \lambda_*(t^j \mid h) \right]^2 \right\}^{1/2} \\
    &\lesssim \sup_{t \in (0,t_2]} \left| \hat{G}(t \mid h) - G_*(t \mid h) \right| \sup_{t \in (0,t_2]} \left| \hat{\lambda}(t \mid h) - \lambda_*(t \mid h) \right|
\end{align*}
and hence, using the boundedness of these functions, conclude that
$$\| R(\hat{S},\hat{G} \mid \cdot) \| \lesssim \| \hat{G} - G_* \|_{(0,t_2]} \| \hat{\lambda} - \lambda_* \|_{(0,t_2]},$$
where we recall that $\| \cdot \| = \| \cdot \|_1$ is the $L^2(P_*)$-norm (conditional on $X>t_1$), and $\|f\|_{(0,t_2]} = \|f\|_{(0,t_2],k=1}$ is the $L^2(P_*)$-norm of $\sup_{t \in (0,t_2]} |f(t,H)|$.

The following lemma further implies that
$$\| \bar{R}(\hat{S},\hat{G} \mid \cdot) \| \lesssim \| \hat{G} - G_* \|_{(0,t_2]} \| \hat{S} - S_* \|_{(0,t_2]}$$
under conditions.
Therefore, the terms of the form $\| \hat{S}_{k} - S_{*,k} \|_{(t_k,t_{k+1}],k} \wedge \| \hat{G}_{k} - G_{*,k} \|_{(t_k,t_{k+1}],k}$ in the bound \eqref{eq: SDR point bound} can be strengthened to a product $\| \hat{S}_{k} - S_{*,k} \|_{(t_k,t_{k+1}],k} \| \hat{G}_{k} - G_{*,k} \|_{(t_k,t_{k+1}],k}$ when the time is discrete.

\begin{lemma}[Equivalent difference for hazard and survival functions for discrete time] \label{lemma: equivalent difference for hazard and survival functions}
    Consider the setting in Section~\ref{sec: product remainder}. For each $j=1,\ldots,J$, it holds that
    $$\left| \hat{S}(t^j \mid h_k) - S_*(t^j \mid h_k) \right| \lesssim \sup_{t \in (0,t_2]} \left| \hat{\lambda}(t \mid h_k) - \lambda_*(t \mid h_k) \right|.$$
    If $S_*(t_2 \mid \cdot)$ is bounded away from zero, it also holds that
    $$\left| \hat{\lambda}(t^j \mid h_k) - \lambda_*(t^j \mid h_k) \right| \lesssim \sup_{t \in (0,t_2]} \left| \hat{S}(t \mid h_k) - S_*(t \mid h_k) \right|$$
    where the constant in $\lesssim$ depends on $S_*(t_2 \mid h_k)$.
\end{lemma}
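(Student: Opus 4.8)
The plan is to reduce everything to the product representations $\hat{S}(t^j \mid h_k) = \prod_{i=1}^j \bigl(1 - \hat{\lambda}(t^i \mid h_k)\bigr)$ and $S_*(t^j \mid h_k) = \prod_{i=1}^j \bigl(1 - \lambda_*(t^i \mid h_k)\bigr)$ already recorded in this section. A preliminary observation I would make is that, since each of $\hat{S}(\cdot \mid h_k)$ and $S_*(\cdot \mid h_k)$ is nonnegative, nonincreasing, and equal to one at $t^0 = 0$, every factor $1 - \hat{\lambda}(t^i \mid h_k)$ and $1 - \lambda_*(t^i \mid h_k)$ — and hence every partial product — lies in $[0,1]$, and that $\hat{\lambda}(t^i \mid h_k), \lambda_*(t^i \mid h_k) \in [0,1]$. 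I also note that $J$ is a fixed integer determined by $P_*$, so any ``constant $\times J$'' bound is an instance of $\lesssim$.

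For the first inequality I would invoke the telescoping identity $\prod_{i=1}^j a_i - \prod_{i=1}^j b_i = \sum_{i=1}^j \bigl(\prod_{l<i} a_l\bigr)(a_i - b_i)\bigl(\prod_{l>i} b_l\bigr)$ with $a_i = 1 - \hat{\lambda}(t^i \mid h_k)$ and $b_i = 1 - \lambda_*(t^i \mid h_k)$. Bounding each partial product by $1$ and using $t^1, \dots, t^j \in (0,\overline{t}]$ (since $t^j \le t^J \le \overline{t}$), taking absolute values gives $\lvert \hat{S}(t^j \mid h_k) - S_*(t^j \mid h_k) \rvert \le \sum_{i=1}^j \lvert \hat{\lambda}(t^i \mid h_k) - \lambda_*(t^i \mid h_k) \rvert \le J \sup_{t \in (0,\overline{t}]} \lvert \hat{\lambda}(t \mid h_k) - \lambda_*(t \mid h_k) \rvert$, which is the claim.

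For the second inequality I would write $\hat{\lambda}(t^j \mid h_k) - \lambda_*(t^j \mid h_k) = S_*(t^j \mid h_k)/S_*(t^{j-1} \mid h_k) - \hat{S}(t^j \mid h_k)/\hat{S}(t^{j-1} \mid h_k)$, place this over the common denominator $S_*(t^{j-1}\mid h_k)\hat{S}(t^{j-1}\mid h_k)$, and add and subtract $S_*(t^j \mid h_k)S_*(t^{j-1}\mid h_k)$ in the numerator to obtain
\[
\hat{\lambda}(t^j \mid h_k) - \lambda_*(t^j \mid h_k) = \frac{S_*(t^j \mid h_k)}{S_*(t^{j-1}\mid h_k)}\cdot\frac{\hat{S}(t^{j-1}\mid h_k) - S_*(t^{j-1}\mid h_k)}{\hat{S}(t^{j-1}\mid h_k)} + \frac{S_*(t^j \mid h_k) - \hat{S}(t^j \mid h_k)}{\hat{S}(t^{j-1}\mid h_k)}.
\]
Here $S_*(t^{j-1}\mid h_k) \ge S_*(\overline{t}\mid h_k) > 0$ by monotonicity and $t^{j-1} \le \overline{t}$, so $\lambda_*$ is well defined and the ratio $S_*(t^j \mid h_k)/S_*(t^{j-1}\mid h_k) \le 1$; it remains only to control $1/\hat{S}(t^{j-1}\mid h_k)$, for which I would split into two cases. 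If $\sup_{t \in (0,\overline{t}]}\lvert \hat{S}(t\mid h_k) - S_*(t\mid h_k)\rvert \ge S_*(\overline{t}\mid h_k)/2$, then since $\hat{\lambda}, \lambda_* \in [0,1]$ the left side is at most $1 \le \tfrac{2}{S_*(\overline{t}\mid h_k)}\sup_{t \in (0,\overline{t}]}\lvert \hat{S}(t\mid h_k) - S_*(t\mid h_k)\rvert$. Otherwise $\hat{S}(t^{j-1}\mid h_k) \ge S_*(t^{j-1}\mid h_k) - \lvert\hat{S}(t^{j-1}\mid h_k) - S_*(t^{j-1}\mid h_k)\rvert \ge S_*(\overline{t}\mid h_k)/2$, and inserting this lower bound into the display yields $\lvert \hat{\lambda}(t^j\mid h_k) - \lambda_*(t^j\mid h_k)\rvert \le \tfrac{4}{S_*(\overline{t}\mid h_k)}\sup_{t \in (0,\overline{t}]}\lvert \hat{S}(t\mid h_k) - S_*(t\mid h_k)\rvert$ (the case $j=1$ is immediate since $\hat{S}(t^0\mid h_k) = S_*(t^0\mid h_k) = 1$). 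In both cases the constant depends only on $S_*(\overline{t}\mid h_k)$, as asserted.

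The only genuine obstacle is the denominator $\hat{S}(t^{j-1}\mid h_k)$ in the second part: it is not assumed bounded away from zero, so one cannot naively divide, and this is precisely where the hypothesis $S_*(\overline{t}\mid h_k) > 0$ must be used. The case split above is the standard remedy — when $\hat{S}$ is far from $S_*$ one discards all structure and uses the trivial bound $\lvert\hat{\lambda} - \lambda_*\rvert \le 1$, and when $\hat{S}$ is close to $S_*$ it inherits the lower bound $S_*(\overline{t}\mid h_k)/2$. Everything else is routine bookkeeping with the telescoping identity and the finiteness of $J$.
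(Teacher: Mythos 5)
Your proposal is correct, and your first inequality is essentially the paper's argument in closed form: the paper proves it by induction on $j$ via the one-step identity $\hat{S}(t^j \mid h_k) - S_*(t^j \mid h_k) = [\hat{S}(t^{j-1} \mid h_k) - S_*(t^{j-1} \mid h_k)][1-\hat{\lambda}(t^j \mid h_k)] + S_*(t^{j-1} \mid h_k)[\lambda_*(t^j \mid h_k) - \hat{\lambda}(t^j \mid h_k)]$, which is exactly your telescoping sum unrolled one factor at a time, and both routes give the bound $J\sup_{t\in(0,\overline{t}]}|\hat{\lambda}(t\mid h_k)-\lambda_*(t\mid h_k)|$. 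Where you genuinely diverge is the second inequality. The paper simply solves that same one-step identity for $\hat{\lambda}(t^j\mid h_k)-\lambda_*(t^j\mid h_k)$, so the only denominator that appears is $S_*(t^{j-1}\mid h_k)\ge S_*(\overline{t}\mid h_k)>0$; the term $1-\hat{\lambda}(t^j\mid h_k)\le 1$ and the two survival differences are bounded by the supremum, giving the constant $2/S_*(\overline{t}\mid h_k)$ with no case analysis and no division by $\hat{S}$ at all. You instead put the two hazards over the common denominator $S_*(t^{j-1}\mid h_k)\hat{S}(t^{j-1}\mid h_k)$, which creates the obstacle you correctly identify --- $\hat{S}(t^{j-1}\mid h_k)$ is not assumed bounded away from zero --- and you repair it with the standard dichotomy (use the trivial bound $|\hat{\lambda}-\lambda_*|\le 1$ when the supremum exceeds $S_*(\overline{t}\mid h_k)/2$, and otherwise inherit the lower bound $\hat{S}(t^{j-1}\mid h_k)\ge S_*(\overline{t}\mid h_k)/2$), arriving at the constant $4/S_*(\overline{t}\mid h_k)$. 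Both arguments are valid and the constants are immaterial for $\lesssim$; the paper's rearrangement is worth noting because it shows the case split is avoidable entirely, which also sidesteps any fuss about $\hat{\lambda}$ when $\hat{S}(t^{j-1}\mid h_k)$ vanishes, whereas your route buys nothing extra here beyond being self-contained without reusing the recursion from part one.
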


The assumption that $S_*$ is bounded away from zero can likely be dropped or relaxed using a tighter bound accounting for the multiplicative factor $S_*$ or $\hat{S}$. We made this assumption for convenience and do not investigate a tighter bound here.

\section{Additional theoretical results and proof} \label{sec: proof}

\subsection{Identification} \label{sec: proof identify}

It will be convenient to consider time windows $(t_k,t_{k+1}]$ between adjacent visit times $t_k$ and $t_{k+1}$.
In this section, before the proof of Theorem~\ref{thm: G identify}, we set $t_{k+1}=\infty$.
For each $k \in \{1,\ldots,K\}$, we define $T_k := \{(T \wedge t_{k+1}) - t_k\} \ind(T > t_k)$, $C_k := \{(C \wedge t_{k+1}) - t_k\} \ind(C>t_k)$, $X_k := T_k \wedge C_k$, and $\Delta_k := \ind(0 < T-t_k = T_k \leq C_k)$. In other words, $T_k$ is the amount of time before the event in the window $(t_k,t_{k+1}]$ and is set to zero if $T \leq t_k$; $C_k$, $X_k$ and $\Delta_k$ are similar counterparts of $C$, $X$ and $\Delta$ for the window $(t_k,t_{k+1}]$. It will be helpful for understanding our results to note the following facts: (i) $T_k < t_{k+1} - t_k$ implies $T_{k'} = 0$ for all $k'>k$, (ii) $C_k < t_{k+1} - t_k$ implies $C_{k'} = 0$ for all $k'>k$, and (iii) $T=\sum_{k=1}^{K} T_k$, $C=\sum_{k=1}^{K} C_k$, $X=\sum_{k=1}^{K} X_k$, $\Delta= \sum_{k=1}^{K} \Delta_k$.

\begin{lemma} \label{lemma: at-risk set}
    Let $\overset{d}{=}$ denote that two random vectors are identically distributed. Under Conditions~\ref{condition: cond ind cens} and \ref{condition: positive prob uncensored}, for any $k=1,\ldots,K$ and $h_{k-1}=(\ell_1,\ldots,\ell_{k-1}) \in \mathcal{H}_{k-1}$, it holds that
    $$\left\{ (\tilde{L}_k,\ldots,\tilde{L}_K,T_k,\ldots,T_K) \mid T>t_k,\tilde{H}_{k-1}=h_{k-1} \right\} \overset{d}{=} \left\{ (\tilde{L}_k,\ldots,\tilde{L}_K,T_k,\ldots,T_K) \mid X>t_k,H_{k-1}=h_{k-1} \right\}.$$
    Consequently, for any $t \in (t_k,t_{k+1}]$ and $h_{k}=(\ell_1,\ldots,\ell_{k}) \in \mathcal{H}_{k}$,
    \begin{align}
        S_{*,k}(t \mid h_k) &= \tilde{P}_*(T>t \mid T>t_k, \tilde{H}_k=h_k), \nonumber \\
        G_{*,k}(t \mid h_k) &= \tilde{P}_*(C > t \mid T>t_k, \tilde{H}_k=h_k). \label{eq: S G at-risk set}
    \end{align}
\end{lemma}

\begin{proof}[Proof of Lemma~\ref{lemma: at-risk set}]
    \begin{align*}
        &(\tilde{L}_k,\ldots,\tilde{L}_K,T_k,\ldots,T_K) \mid T>t_k,\tilde{H}_{k-1}=h_{k-1} \\
        &= (\tilde{L}_k,\ldots,\tilde{L}_K,T_k,\ldots,T_K) \mid T_{j}=t_{j+1}-t_j, \tilde{L}_j=\ell_j \text{ for all } j < k \\
        \intertext{By Condition~\ref{condition: cond ind cens} at $t_1$, the display continues as}
        &\overset{d}{=} (\tilde{L}_k,\ldots,\tilde{L}_K,T_k,\ldots,T_K) \mid T_{j}=t_{j+1}-t_j, \tilde{L}_j=\ell_j \text{ for all } j < k, C_1=t_2-t_1 \\
        &= (\tilde{L}_k,\ldots,\tilde{L}_K,T_k,\ldots,T_K) \mid X>t_2, H_1=\ell_1, T_{j}=t_{j+1}-t_j, \tilde{L}_j=\ell_j \text{ for all } 2 \leq j < k \\
        \intertext{Repeatedly apply Condition~\ref{condition: cond ind cens} at $t_2,\ldots,t_{k-1}$, by the definition of $H_{k-1}$, the display continues as}
        &\overset{d}{=} (\tilde{L}_k,\ldots,\tilde{L}_K,T_k,\ldots,T_K) \mid X>t_k,H_{k-1}=h_{k-1}.
    \end{align*}
    Condition~\ref{condition: positive prob uncensored} is used implicitly to ensure that the event of being uncensored that is conditioned on has a nonzero probability.
    Eq.~\ref{eq: S G at-risk set} follows directly from the identical distributions.
\end{proof}

\begin{proof}[Proof of Theorem~\ref{thm: G identify}]
    We prove that $\tilde{P}_*(T > \tau \mid T>t_k, \tilde{H}_k) = Q_{*,k}(\tilde{H}_k)$ for all $k=K,\ldots,1$ by induction on $k$, and then Theorem~\ref{thm: G identify} follows by taking $k=\varsigma$.
    For $k=K$, $\tilde{P}_*(T > \tau \mid T>t_K, \tilde{H}_K) = Q_{*,K}(\tilde{H}_K)$ $P_*$-almost surely by \eqref{eq: S G at-risk set} at $k=K$. Suppose that $\tilde{P}_*(T > \tau \mid T>t_{k+1}, \tilde{H}_{k+1}) = Q_{*,k+1}(\tilde{H}_{k+1})$, then by Lemma~\ref{lemma: at-risk set},
    \begin{align*}
        U_{*,k}(h_k) &= \expect_*[Q_{*,k+1}(H_{k+1}) \mid X>t_{k+1}, H_k=h_k] \\
        &= \expect_*[Q_{*,k+1}(\tilde{H}_{k+1}) \mid T>t_{k+1}, \tilde{H}_k=h_k] \\
        &= \expect_*[\tilde{P}_*(T>\tau \mid T>t_{k+1}, \tilde{H}_{k+1}) \mid T>t_{k+1}, \tilde{H}_k=h_k] \\
        &= \tilde{P}_*(T>\tau \mid T>t_{k+1}, \tilde{H}_k=h_k),
    \end{align*}
    and
    \begin{align*}
        Q_{*,k}(h_k) &= U_{*,k}(h_k) S_{*,k}(t_{k+1} \mid h_k) \\
        &= \tilde{P}_*(T>\tau \mid T>t_{k+1}, \tilde{H}_k=h_k) \tilde{P}_*(T>t_{k+1} \mid T>t_k, \tilde{H}_k=h_k) \\
        &= \tilde{P}_*(T>\tau \mid T>t_k, \tilde{H}_k=h_k).
    \end{align*}
    Thus, by induction, $\tilde{P}_*(T > \tau \mid T>t_k, \tilde{H}_k) = Q_{*,k}(\tilde{H}_k)$ for all $k$.
\end{proof}

We next present a stronger version of \eqref{eq: DR transform}, which reveals the analytic expression of the remainder of the transformation $\TS_{k,t}$.

\begin{lemma} \label{lemma: DR transform remainder}
    Suppose that $S_k$ and $G_k$ are two conditional survival functions---that is, both are non-increasing, non-negative, and right-continuous, and both take value 1 at time $t_k$ given any historical covariates---such that $G_k(t_{k+1} \mid H_k)>0$ $P_*$-almost surely.
    Under Conditions~\ref{condition: cond ind cens} and \ref{condition: positive prob uncensored}, for all $t \in (t_k,t_{k+1}]$, with $\bar{R}_{k,t}$ defined in \eqref{eq: DR remainder}, whenever $G_k(t_{k+1} \mid H_k) > 0$,
    $$\expect_* \left[ \TS_{k,t}(S_k,G_k)(X,\Delta \mid H_k) \mid X>t_k, H_k \right] = S_{*,k}(t \mid H_k) + \bar{R}_{k,t}(S_k,G_k \mid H_k).$$
    Moreover, $|\bar{R}_{k,t}(S_k,G_k \mid H_k)| \lesssim \bar{B}_{k,t}(S_k,G_k \mid H_k)$, where $\bar{B}_{k,t}(S_k,G_k \mid H_k)$ is defined as
    \begin{align*}
        \frac{\ind\{G_k(t \mid H_k)>0\}}{G_k(t \mid H_k)} \min \left\{ \sup_{s \in (t_k,t]} |G_k(s \mid H_k) - G_{*,k}(s \mid H_k)|, \sup_{s \in (t_k,t]} |S_k(s \mid H_k) - S_{*,k}(s \mid H_k)| \right\}.
    \end{align*}
\end{lemma}

Recall that, if $S_k(s \mid H_k) = 0$ for some $s \leq t$ so that $S_k(t \mid H_k)=0$, we use the convention that $0 \times \infty := 0$ to define $\bar{R}_{k,t}(S_k,G_k \mid H_k) := 0$.

\begin{proof}
    Without loss of generality, suppose that $t_k=0$. We also drop the subscript $k$ and the conditioning on $\{X>t_k, H_k\}$ from the notation.
    Let $F_*: s \mapsto P_*(X \leq s)$ be the true cumulative distribution function of $X=T \wedge C$. By definition, 
    $$\expect_* \left[ \TS_t(S,G)(X,\Delta) \right] = S(t)-S(t) \expect_* \left[ \frac{\ind(T \leq t, T \leq C)}{S(T) G(T-)} + \int_{(0,X \wedge t]} \frac{S(\intd s)}{S(s) S(s-) G(s-)} \right].$$
    Under Condition~\ref{condition: cond ind cens},
    \begin{equation}
        1-F_*(s) = S_*(s) G_*(s). \label{eq: X T C survival}
    \end{equation}
    By Condition~\ref{condition: cond ind cens}, we have that
    \begin{align*}
        & \expect_*\left[ \frac{\ind(X \leq t) \Delta}{S(X) G(X-)} + \int_{(0,X \wedge t]} \frac{S(\intd s)}{S(s) S(s-) G(s-)} \right] \\
        &= \expect_* \left[ \frac{\ind(T \leq t) \ind(T \leq C)}{S(T) G(T-)} \right] + \int_{(0,\infty)} \int_{(0,x \wedge t]} \frac{S(\intd s)}{S(s) S(s-) G(s-)} F_*(\intd x) \\
        &= -\int_{(0,t]} \frac{G_*(s-)}{S(s) G(s-)} S_*(\intd s) + \int_{(0,t]} \frac{S(\intd s)}{S(s) S(s-) G(s-)} \int_{[s,\infty)} F_*(\intd x) & \text{(Fubini's Theorem)} \\
        &= -\int_{(0,t]} \frac{G_*(s-)}{S(s) G(s-)} S_*(\intd s) + \int_{(0,t]} \frac{S(\intd s)}{S(s) S(s-) G(s-)} S_*(s-) G_*(s-) & \text{(by \eqref{eq: X T C survival})} \\
        &= - \int_{(0,t]} \frac{1}{S(s)} S_*(\intd s) + \int_{(0,t]} \frac{S_*(s-)}{S(s) S(s-)} S(\intd s) \\
        &\quad- \int_{(0,t]} \frac{G_*(s-) - G(s-)}{S(s) G(s-)} S_*(\intd s) + \int_{(0,t]} \frac{S_*(s-) \{G_*(s-) - G(s-)\}}{S(s) S(s-) G(s-)} S(\intd s) \\
        &= \int_{(0,t]} \left\{ \frac{(S-S_*)(\intd s)}{S(s)} - \frac{S(s-)-S_*(s-)}{S(s) S(s-)} S(\intd s) \right\} \\
        &\quad+ \int_{(0,t]} \frac{G_*(s-)-G(s-)}{G(s-)} \left\{ \frac{(S-S_*)(\intd s)}{S(s)} - \frac{S(s-)-S_*(s-)}{S(s) S(s-)} S(\intd s) \right\} \\
        &= \int_{(0,t]} \left\{ \frac{(S-S_*)(\intd s)}{S(s)} - [S(s-)-S_*(s-)] \left(\frac{1}{S}\right)(\intd s) \right\} \\
        &\quad+ \int_{(0,t]} \frac{G_*(s-)-G(s-)}{G(s-)} \left\{ \frac{(S-S_*)(\intd s)}{S(s)} - [S(s-)-S_*(s-)] \left(\frac{1}{S}\right)(\intd s) \right\} \\
        &= \frac{S(t)-S_*(t)}{S(t)} - \frac{S(0)-S_*(0)}{S(0)} + \int_{(0,t]} \frac{G_*(s-)-G(s-)}{G(s-)} \left( \frac{S-S_*}{S} \right)(\intd s) & \text{(integration by parts)} \\
        &= \frac{S(t)-S_*(t)}{S(t)} + \int_{(0,t]} \frac{G_*(s-)-G(s-)}{G(s-)} \left( \frac{S-S_*}{S} \right)(\intd s). & \text{($S(0)=S_*(0)=1$)}
    \end{align*}
    Using the full notation again, without assuming $t_k=0$, the above equality implies that
    \begin{align*}
        &\expect_* \left[ \TS_{k,t}(S_k,G_k)(X,\Delta \mid H_k) \mid X>t_k, H_k \right] - S_{*,k}(t \mid H_k) \\
        &= S_k(t \mid H_k) - S_{*,k}(t \mid H_k) - \{ S_k(t \mid H_k)-S_{*,k}(t \mid H_k) \} \\
        &\quad- S_k(t \mid H_k) \int_{(t_k,t]} \frac{G_k(s- \mid H_k)-G_{*,k}(s- \mid H_k)}{G_k(s- \mid H_k)} \left( \frac{S_k-S_{*,k}}{S_k} \right)(\intd s \mid H_k) \\
        &= \bar{R}_{k,t}(S_k,G_k \mid H_k).
    \end{align*}

    We next prove the bound on $\bar{R}_{k,t}(S_k,G_k \mid H_k)$. For $H_k$ in the zero-probability set such that $G_k(t_{k+1} \mid H_k) = 0$, the bound is trivial since $\bar{R}_{k,t}(S_k,G_k \mid H_k)=\bar{B}_{k,t}(S_k,G_k \mid H_k)=0$.
    Otherwise, by the fact that $t \mapsto S_k(t \mid h_k)$ and $t \mapsto G_k(t \mid h_k)$ are non-increasing and bounded by one,
    \begin{align}
        |\bar{R}_{k,t}(S_k,G_k \mid H_k)| &\leq |S_k(t \mid H_k)| \int_{(t_k,t]} \left| \frac{G_k(s- \mid H_k)-G_{*,k}(s- \mid H_k)}{G_k(s- \mid H_k)} \right| \left| \left( \frac{S_k-S_{*,k}}{S_k} \right)(\intd s \mid H_k) \right| \nonumber \\
        &\lesssim \frac{1}{G_k(t \mid H_k)} \int_{(t_k,t]} \left| G_k(s- \mid H_k)-G_{*,k}(s- \mid H_k) \right| \left| (S_k-S_{*,k})(\intd s \mid H_k) \right| \nonumber \\
        &\lesssim \frac{1}{G_k(t \mid H_k)} \sup_{s \in (t_k,t]} |G_k(s \mid H_k) - G_{*,k}(s \mid H_k)|. \label{eq: bound1 on remainder}
    \end{align}
    By integration by parts,
    \begin{align*}
        \bar{R}_{k,t}(S_k,G_k \mid H_k) &= -\frac{ \{G_k(t \mid H_k)-G_{*,k}(t \mid H_k)\} \{S_k(t \mid H_k)-S_{*,k}(t \mid H_k)\}}{G_k(t \mid H_k)} \\
        &\quad+ \frac{ \{G_k(t_k \mid H_k)-G_{*,k}(t_k \mid H_k)\} \{S_k(t_k \mid H_k)-S_{*,k}(t_k \mid H_k)\}}{G_k(t_k \mid H_k)} \\
        &\quad+ S_k(t \mid H_k) \int_{(t_k,t]} \frac{S_k(s- \mid H_k)-S_{*,k}(s- \mid H_k)}{S_k(s- \mid H_k)} \left( \frac{G_k-G_{*,k}}{G_k} \right)(\intd s \mid H_k) \\
        &= -\frac{ \{G_k(t \mid H_k)-G_{*,k}(t \mid H_k)\} \{S_k(t \mid H_k)-S_{*,k}(t \mid H_k)\}}{G_k(t \mid H_k)} \\
        &\quad+ S_k(t \mid H_k) \int_{(t_k,t]} \frac{S_k(s- \mid H_k)-S_{*,k}(s- \mid H_k)}{S_k(s- \mid H_k)} \left( \frac{G_k-G_{*,k}}{G_k} \right)(\intd s \mid H_k).
    \end{align*}
    By a similar argument as above,
    \begin{align}
        |\bar{R}_{k,t}(S_k,G_k \mid H_k)| &\lesssim \frac{ |G_k(t \mid H_k)-G_{*,k}(t \mid H_k)| |S_k(t \mid H_k)-S_{*,k}(t \mid H_k)|}{G_k(t \mid H_k)} \nonumber \\
        &\quad+ \frac{1}{G_k(t \mid H_k)} \sup_{s \in (t_k,t]} |S_k(s \mid H_k) - S_{*,k}(s \mid H_k)| \nonumber \\
        &\lesssim \frac{1}{G_k(t \mid H_k)} \sup_{s \in (t_k,t]} |S_k(s \mid H_k) - S_{*,k}(s \mid H_k)|. \label{eq: bound2 on remainder}
    \end{align}
    Since both \eqref{eq: bound1 on remainder} and \eqref{eq: bound2 on remainder} hold, $|\bar{R}_{k,t}(S_k,G_k \mid H_k)| \lesssim \bar{B}_{k,t}(S_k,G_k \mid H_k)$.
\end{proof}

We next present a lemma that is useful to handle a misalignment in the domain of $U_{*,k}$. This function is only well defined for $h_k \in \mathcal{H}_k$ such that $P_*(X>t_{k+1} \mid X>t_k, H_k=h_k) > 0$ and might not be well defined for all $h_k$ in the entire support of $H_k \mid X>t_k$.
It is thus necessary to multiply by $S_{*,k}(t_{k+1} \mid h_k)$ when studying the distance induced by $\| \cdot\|_k$ between, e.g., $U_k: \mathcal{H}_k \mapsto \real$ and $U_{*,k}$, so that their difference is defined to be zero for $h_k$ such that $P_*(X>t_{k+1} \mid X>t_k, H_k=h_k) = 0$.
The following lemma connects this distance to $\| \cdot \|_{k+1}$, where $U_{*,k}$ is well defined.

\begin{lemma} \label{lemma: k k+1 norms}
    Let $f: \mathcal{H}_k \to \real$ be any measurable function. Under Conditions~\ref{condition: cond ind cens} and \ref{condition: consistency of strong positivity}, for all $k = \varsigma,\ldots,K-1$, $\| S_{*,k}(t_{k+1} \mid \cdot) f(\cdot) \|_{k} \lesssim \| f \|_{k+1}$.
\end{lemma}

\begin{proof}
    Let $p_{H_k \mid X>t_k}$ and $p_{H_k \mid X>t_{k+1}}$ denote the conditional densities of $H_k \mid X>t_k$ and $H_k \mid X>t_{k+1}$, respectively, with respect to a dominating measure $\mu$. By Bayes' Theorem,
    $$\frac{p_{H_k \mid X>t_k}}{p_{H_k \mid X>t_{k+1}}}(h_k) = \frac{P_*(X>t_{k+1} \mid X>t_k)}{P_*(X>t_{k+1} \mid X>t_k, H_k=h_k)} = \frac{P_*(X>t_{k+1} \mid X>t_k)}{S_{*,k}(t_{k+1} \mid h_k) G_{*,k}(t_{k+1} \mid h_k)}.$$
    Therefore,
    \begin{align*}
        \| S_{*,k}(t_{k+1} \mid \cdot) f(\cdot) \|_{k}^2 &= \int S_{*,k}(t_{k+1} \mid h_k)^2 f(h_k)^2 p_{H_k \mid X>t_k}(h_k) \intd \mu(h_k) \\
        &= \int S_{*,k}(t_{k+1} \mid h_k)^2 f(h_k)^2 p_{H_k \mid X>t_{k+1}}(h_k) \frac{P_*(X>t_{k+1} \mid X>t_k)}{S_{*,k}(t_{k+1} \mid h_k) G_{*,k}(t_{k+1} \mid h_k)} \intd \mu(h_k) \\
        &= \int S_{*,k}(t_{k+1} \mid h_k) f(h_k)^2 p_{H_k \mid X>t_{k+1}}(h_k) \frac{P_*(X>t_{k+1} \mid X>t_k)}{G_{*,k}(t_{k+1} \mid h_k)} \intd \mu(h_k) \\
        &\lesssim \int f(h_k)^2 p_{H_k \mid X>t_{k+1}}(h_k) \intd \mu(h_k) = \| f \|_{k+1}^2,
    \end{align*}
    where the last step follows from Condition~\ref{condition: consistency of strong positivity} and that $S_{*,k} \leq 1$, $P_*(X>t_{k+1} \mid X>t_k) \leq 1$.
    Thus, Lemma~\ref{lemma: k k+1 norms} holds.
\end{proof}

We next present a stronger version of Theorem~\ref{thm: SDR identify}, which reveals the analytic expression of the remainder of the transformation $\T_{k}$.
\begin{lemma} \label{lemma: SDR transform remainder}
    Suppose that, for all $k=1,\ldots,K$, $S_k$ and $G_k$ are conditional survival functions such that $G_k(t_{k+1} \mid H_k)>0$ $P_*$-almost surely; $U_k$ is any square-integrable function with $U_K=1$.
    Define recursively
    \begin{align}
        &R_{K}(\{U_K,S_K,G_K\} \mid H_K) := \bar{R}_{K,\tau}(S_K,G_K \mid H_K), \nonumber \\
        &R_{k}(\{U_j, S_j, G_j\}_{j=k}^K \mid H_k) \nonumber \\
        &:= \frac{S_{*,k}(t_{k+1} \mid H_k) G_{*,k}(t_{k+1} \mid H_k)}{G_k(t_{k+1} \mid H_k)} \expect_*[R_{k+1}(\{U_j, S_j, G_j\}_{j=k+1}^K \mid H_{k+1}) \mid X>t_{k+1}, H_k] \nonumber \\
        &\quad+ \frac{S_{*,k}(t_{k+1} \mid H_k)}{G_k(t_{k+1} \mid H_k)} \{ G_k(t_{k+1} \mid H_k) - G_{*,k}(t_{k+1} \mid H_k) \} \{ U_k(H_k) - U_{*,k}(H_k) \} \nonumber \\
        &\quad+ U_k(H_k) \bar{R}_{k,t_{k+1}}(S_k,G_k \mid H_k) \qquad \text{ for } k=K-1,\ldots,1 \label{eq: SDR remainder recursive}
    \end{align}
    whenever $G_k(t_{k+1} \mid H_k) > 0$ and zero otherwise.
    Under Conditions~\ref{condition: cond ind cens} and \ref{condition: positive prob uncensored}, whenever $G_k(t_{k+1} \mid H_k) > 0$,
    \begin{equation}
        \expect_*[\T_{k}( \{U_j, S_j, G_j\}_{j=k}^K)(O) \mid X>t_k, H_k] = Q_{*,k}(H_k) + R_{k}(\{U_j, S_j, G_j\}_{j=k}^K \mid H_k). \label{eq: SDR remainder}
    \end{equation}
    Moreover, if there exist constants $\delta_k>0$ such that, for all $k$ and $P_*$-almost every $h_k$, $G_k(t_{k+1} \mid H_k=h_k) \geq \delta_k >0$ and $U_k(h_k)$ is bounded, then $\| R_{k}(\{U_j, S_j, G_j\}_{j=k}^K \mid \cdot) \|_k \lesssim B_{k}(\{U_j, S_j, G_j\}_{j=k}^K)$ where
    \begin{align*}
        B_{k}(\{U_j, S_j, G_j\}_{j=k}^K) &:= \sum_{j=k}^K \left\{ \prod_{\ell=k}^{j-1} \frac{1}{\delta_\ell} \right\} \left\{ \|G_j - G_{*,j}\|_{(t_j,t_{j+1}],j} \| U_j - U_{*,j} \|_{j+1} + \| \bar{B}_{j,t_{j+1}}(S_j,G_j \mid \cdot) \|_j \right\}.
    \end{align*}
    If, for all $j$, either (i) $S_j=S_{*,j}$ for all $t \in (t_j,t_{j+1}]$ and $U_j=U_{*,j}$, or (ii) $G_j=G_{*,j}$ for all $t \in (t_j,t_{j+1}]$, then $B_{k}(\{U_j, S_j, G_j\}_{j=k}^K) = 0$ and the assumptions that $G_k(t_{k+1} \mid H_k)$ is bounded away from 0 and that $U_k$ is bounded almost surely can be dropped.
\end{lemma}

\begin{proof}
    We prove by induction from $k=K$ to $k=1$. First consider the case where $G_k(t_{k+1} \mid H_k) \geq \delta_k$ and $U_K$ is bounded almost surely.
    Since $\T_{K}(\{U_K,S_K,G_K\})(O) = \TS_{K,t_{K+1}}(S_K,G_K)(X,\Delta \mid H_K)$ and $B_{K}(\{U_K, S_K, G_K\}) = \| \bar{B}_{K,t_{k+1}}(S_K,G_K \mid \cdot) \|_K$, Lemma~\ref{lemma: SDR transform remainder} holds for $k=K$ by Lemma~\ref{lemma: DR transform remainder}.
    
    Suppose that Lemma~\ref{lemma: SDR transform remainder} holds for $k+1$ and consider $k$.
    When $H_k$ is in the one-probability set satisfying the boundedness condition,
    by \eqref{eq: SDR transform recursive}, Condition~\ref{condition: cond ind cens} and the induction hypothesis,
    \begin{align*}
        &\expect_*[\T_{k}( \{U_j, S_j, G_j\}_{j=k}^K)(O) \mid X>t_k,H_k] \\
        &= \frac{S_{*,k}(t_{k+1} \mid H_k) G_{*,k}(t_{k+1} \mid H_{k+1})}{G_k(t_{k+1} \mid H_k)} \{ \expect_*[Q_{*,k+1}(H_{k+1}) \mid X>t_{k+1},H_k] \\
        &\qquad\qquad\qquad\qquad+ \expect_*[R_{k+1}(\{U_j, S_j, G_j\}_{j=k+1}^K \mid H_{k+1}) \mid X>t_{k+1},H_k] - U_k(H_k) \} \\
        &\qquad\qquad+ U_k(H_k) \expect_*[ \TS_{k,t_{k+1}}(S_k,G_k)(X,\Delta \mid H_k) \mid X>t_k,H_k] \\
        &= \frac{S_{*,k}(t_{k+1} \mid H_k) G_{*,k}(t_{k+1} \mid H_k)}{G_k(t_{k+1} \mid H_k)} \\
        &\qquad\qquad\times \left\{ U_{*,k}(H_k) - U_k(H_k) + \expect_*[R_{k+1}(\{U_j, S_j, G_j\}_{j=k+1}^K \mid H_{k+1}) \mid X>t_{k+1},H_k] \right\} \\
        &\qquad+ U_k(H_k) \{ S_{*,k}(t_{k+1}) + \bar{R}_{k,t_{k+1}}(S_k,G_k \mid H_k) \} \qquad \text{(Lemma~\ref{lemma: DR transform remainder})} \\
        &= S_{*,k}(t_{k+1} \mid H_k) G_{*,k}(t_{k+1} \mid H_k) \left\{ \frac{1}{G_{*,k}(t_{k+1} \mid H_k)} + \frac{1}{G_k(t_{k+1} \mid H_k)} - \frac{1}{G_{*,k}(t_{k+1} \mid H_k)} \right\} \\
        &\qquad\qquad\times \left\{ U_{*,k}(H_k) - U_k(H_k) + \expect_*[R_{k+1}(\{U_j, S_j, G_j\}_{j=k+1}^K \mid H_{k+1}) \mid X>t_{k+1},H_k] \right\} \\
        &\qquad+ U_k(H_k) \{ S_{*,k}(t_{k+1}) + \bar{R}_{k,t_{k+1}}(S_k,G_k \mid H_k) \} \\
        &= Q_{*,k}(H_k) + R_{k}(\{U_j, S_j, G_j\}_{j=k}^K \mid H_k).
    \end{align*}
    Moreover, note that
    \begin{align*}
        B_{k}(\{U_j, S_j, G_j\}_{j=k}^K) &= \frac{1}{\delta_k} B_{k+1}(\{U_j, S_j, G_j\}_{j=k+1}^K) \\
        &\quad+ \|G_k - G_{*,k}\|_{(t_k,t_{k+1}],k} \| U_k - U_{*,k} \|_{k+1} + \| \bar{B}_{k,t_{k+1}}(S_k,G_k \mid \cdot) \|_k.
    \end{align*}
    By induction hypothesis, $\| R_{k+1}(\{U_j, S_j, G_j\}_{j=k+1}^K \mid \cdot) \|_{k+1} \lesssim B_{k+1}(\{U_j, S_j, G_j\}_{j=k+1}^K)$.
    By Lemma~\ref{lemma: DR transform remainder}, $\| \bar{R}_{k,t_{k+1}}(S_k,G_k \mid \cdot) \|_k \lesssim \| \bar{B}_{k,t_{k+1}}(S_k,G_k \mid \cdot) \|_k$.
    Thus, $\| R_{k}(\{U_j, S_j, G_j\}_{j=k}^K \mid \cdot) \|_k \lesssim B_{k}(\{U_j, S_j, G_j\}_{j=k}^K)$ by Lemma~\ref{lemma: k k+1 norms}, so Lemma~\ref{lemma: SDR transform remainder} holds for $k$. By induction, Lemma~\ref{lemma: SDR transform remainder} holds for all $k$.

    If, for all $k$, either (i) $U_k=U_{*,k}$ and $S_k=S_{*,k}$, or (ii) $G_k=G_{*,k}$, then $R_{K}(\{U_K, S_K, G_K\} \mid H_K)$ and $B_{K}(\{U_K, S_K, G_K\})$ are both zero by Lemma~\ref{lemma: DR transform remainder}. By a similar induction argument, $R_{k}(\{U_j, S_j, G_j\}_{j=k}^K \mid H_k)$ and $B_{k}(\{U_j, S_j, G_j\}_{j=k}^K)$ are both zero for all $k$ and thus Lemma~\ref{lemma: SDR transform remainder} holds for all $k$.
\end{proof}

\subsection{Bounds} \label{sec: proof bound}

\begin{proof}[Proof of Theorem~\ref{thm: bound}]
    The results \eqref{eq: SDR point bound} for the SDR oracle estimator follow immediately from Lemmas~\ref{lemma: SDR transform remainder} and \ref{lemma: k k+1 norms} by restricting to the high-probability event that $\hat{U}_k^\SDR$ is $P_*$-almost surely bounded and $\hat{G}_k$ is bounded away from zero, and substituting in $(\hat{U}_{j}^\SDR,\hat{S}_j,\hat{G}_j)$ for $(U_j,S_j,G_j)$.
    Since $U_{*,k}$ and $S_{*,k}$ are bounded, the results \eqref{eq: IPCW point bound} for the IPCW oracle estimator also follow from Lemma~\ref{lemma: SDR transform remainder} by the substitution described above Corollary~\ref{coro: IPCW identify}.
    We next prove the bounds \eqref{eq: G point bound} for the G-computation estimator. Since
    \begin{align*}
        & Y^\G_{k} - Q_{*,k}(H_k) = \{ \hat{S}_{k}(t_{k+1} \mid H_k) - S_{*,k}(t_{k+1} \mid H_k) \} \hat{U}_{k}^\G(H_k) + S_{*,k}(t_{k+1} \mid H_k) \{\hat{U}_{k}^\G(H_k) - U_{*,k}(H_k)\},
    \end{align*}
    when $\hat{U}_{k}^\G$ is $P_*$-almost surely bounded, it holds that
    $$\left| Y^\G_{k} - Q_{*,k}(H_k) \right| \lesssim |\hat{S}_{k}(t_{k+1} \mid H_k) - S_{*,k}(t_{k+1} \mid H_k)| + |S_{*,k}(t_{k+1} \mid H_k) \{\hat{U}_{k}^\G(H_k) - U_{*,k}(H_k)\} |$$
    $P_*$-almost surely, and \eqref{eq: G point bound} follows from Lemma~\ref{lemma: k k+1 norms}.
\end{proof}

\begin{proof}[Proof of Corollary~\ref{coro: SDR}]
    We use an induction argument similar to the proof of Lemma~\ref{lemma: SDR transform remainder}. We first show that, for every $j = \varsigma,\ldots,K$, $\| \hat{U}_{j}^\SDR-U_{*,j} \|_{j+1} = \smallo_p(1)$ if $\| \hat{U}_{j}^\SDR-\bar{U}_{j}^\SDR \|_{j+1} = \smallo_p(1)$.
    Since $\hat{U}_{K}^\SDR=\bar{U}_{K}^\SDR=U_{*,K}=1$, the claim holds trivially for $K$.

    Consider a general $k = \varsigma,\ldots,K-1$ and suppose that the claim holds for $k+1,\ldots,K$.
    If $\| \hat{U}_{k}^\SDR-\bar{U}_{k}^\SDR \|_{k+1} = \smallo_p(1)$, then, by \eqref{eq: SDR point bound}, the induction hypothesis, and the assumption of consistent estimation of $(S_{*,j},\bar{U}_{j}^\SDR)$ or $G_{*,j}$ for each $j = k+1,\ldots,K$, it holds that $(S_{*,j},U_{*,j})$ or $G_{*,j}$ is estimated consistently for each $j = k+1,\ldots,K$, and thus $\| \hat{U}_{k}^\SDR-U_{*,k} \|_{k+1} \leq \| \hat{U}_{k}^\SDR-\bar{U}_{k}^\SDR \|_{k+1} + \| \bar{U}_{k}^\SDR-U_{*,k} \|_{k+1} = \smallo_p(1)$, so the claim holds for $k$.
    Thus, by induction, the claim holds for all $j=\varsigma,\ldots,K$.
    
    Using a similar argument, by \eqref{eq: SDR point bound}, $\| \hat{Q}_{\varsigma}^\SDR-\bar{Q}_{\varsigma}^\SDR \|_\varsigma = \smallo_p(1)$, and the assumption of consistent estimation of $(S_{*,j},\bar{U}_{j}^\SDR)$ or $G_{*,j}$ for each $j = \varsigma,\ldots,K$, it holds that $\| \hat{Q}_{\varsigma}^\SDR-Q_{*,\varsigma} \|_\varsigma \leq \| \hat{Q}_{\varsigma}^\SDR-\bar{Q}_{\varsigma}^\SDR \|_\varsigma + \| \bar{Q}_{\varsigma}^\SDR-Q_{*,\varsigma} \|_\varsigma = \smallo_p(1)$.
\end{proof}

\subsection{Asymptotic efficiency} \label{sec: proof efficiency}

For any distribution $P$ and a $P$-integrable function $f$, we use $P f$ or $P f(O)$ to denote $\int f \intd P = \int f(o) P(\intd o)$.
Note that, if $f$ is random, $P_* f$ is also random by definition. 
Let $P_n$ denote the empirical distribution.
For any distribution $P$ on the observed data, we replace the subscript $*$ with $P$ to denote functions and quantities of $P$ rather than $P_*$. For example, we use $\beta_{P}$ to denote the survival probability $\beta_{*}$ with $P_*$ replaced by $P$.
We also use $\expect_P$ to denote expectation with respect to $P$.

\begin{proof}[Proof of Theorem~\ref{thm: efficiency}: Efficient influence function]
    In this proof, we show that $D(\{U_{*,j}, S_{*,j}, G_{*,j}\}_{j=1}^K, \allowbreak Q_{*,0})$ is the efficient influence function for estimating $Q_{*,0}$ under a nonparametric model, under Conditions~\ref{condition: cond ind cens} and \ref{condition: consistency of strong positivity}. Equivalently, with $P$ denoting a generic distribution and $Q_{P,0}$ denoting its implied marginal survival probability at $\tau$, we show that $P \mapsto Q_{P,0}$ is pathwise differentiable with canonical gradient $D(\{U_{*,j}, S_{*,j}, G_{*,j}\}_{j=1}^K, Q_{*,0})$ at $P_*$. We use an induction argument similar to Lemma~\ref{lemma: SDR transform remainder} and Corollary~\ref{coro: SDR}.
    Note that $\sigma_*^2 < \infty$ by Condition~\ref{condition: consistency of strong positivity}.

    It will be convenient to consider time windows $(t_k,t_{k+1}]$ between adjacent visit times $t_k$ and $t_{k+1}$ one by one.
    Recall the definition of $T_k$, $C_k$, etc. in Section~\ref{sec: proof identify}.
    For any two indices $j_1$ and $j_2$, let subscript $[j_1:j_2]$ denote the collection from index $j_1$ to $j_2$ if $j_1 \leq j_2$---for example, $x_{[1:k]} := (x_1,\ldots,x_k)$---and denote the empty set if $j_1 > j_2$.
    Under distribution $P_*$, let $P_{*,L_k}$ denote the distribution of $L_k$ conditional on $(H_{k-1},X_{[1:(k-1)]},\Delta_{[1:(k-1)]})$ (recall that we define $H_k \equiv 0$ if $X \leq t_k$), and $P_{*,X_{k},\Delta_{k}}$ denote the distribution of $(X_{k},\Delta_{k})$ conditional on $(H_k,X_{[1:(k-1)]},\Delta_{[1:(k-1)]})$.

    Let $L^2_0(P_*)$ denote the set of score functions $f: \mathcal{O} \rightarrow \real$ such that $\expect_*[f(O)]=0$ and $\expect_*[f(O)^2]<\infty$, which is the tangent set at $P_*$ under a nonparametric model. Let $\mathcal{V}$ denote the collection of functions in $L^2_0(P_*)$ with range contained in $[-1,1]$. We note that the $L^2_0(P_*)$-closure of the linear span of $\mathcal{V}$ is indeed $L^2_0(P_*)$. For a score $V \in \mathcal{V}$, define the following functions recursively from $k=1$ to $k=K$ (recall $\sum_a^b \cdots = 0$ whenever $a>b$):
    \begin{align*}
        &V_{L_k}: (\ell_k \mid h_{k-1},x_{[1:(k-1)]},\delta_{[1:(k-1)]}) \\
        &\qquad\qquad\mapsto \expect_*[V(O) \mid L_k=\ell_k, H_{k-1}=h_{k-1}, X_{[1:(k-1)]}=x_{[1:(k-1)]}, \Delta_{[1:(k-1)]}=\delta_{[1:(k-1)]}] \\
        &\qquad\qquad\qquad- \sum_{j=1}^{k-1} V_{L_j}(\ell_j \mid h_{j-1}, x_{[1:(j-1)]}, \delta_{[1:(j-1)]}) - \sum_{j=1}^{k-1} V_{X_{j},\Delta_{j}}(x_{j},\delta_{j} \mid h_j,x_{[1:(j-1)]},\delta_{[1:(j-1)]}), \\
        &V_{X_{k},\Delta_{k}}: (x_{k},\delta_{k} \mid h_k,x_{[1:(k-1)]},\delta_{[1:(k-1)]}) \\
        &\qquad\qquad\mapsto \expect_*[V(O) \mid X_{k}=x_{k}, \Delta_{k}=\delta_{k}, H_k=h_k, X_{[1:(k-1)]}=x_{[1:(k-1)]},\Delta_{[1:(k-1)]}=\delta_{[1:(k-1)]}] \\
        &\qquad\qquad\qquad- \sum_{j=1}^k V_{L_j}(\ell_j \mid h_{j-1}, x_{[1:(j-1)]}, \delta_{[1:(j-1)]}) - \sum_{j=1}^{k-1} V_{X_{j},\Delta_{j}}(x_{j},\delta_{j} \mid h_j,x_{[1:(j-1)]},\delta_{[1:(j-1)]}).
    \end{align*}
    We note that $\expect_*[V_{L_k}(L_k \mid H_{k-1},X_{[1:k]},\Delta_{[1:k]}) \mid H_{k-1},X_{[1:k]},\Delta_{[1:k]}] = 0$, $\expect_*[V_{X_{k},\Delta_{k}}(X_{k},\Delta_{k} \mid H_k,X_{[1:(k-1)]},\Delta_{[1:(k-1)]}) \mid H_k,X_{[1:(k-1)]},\Delta_{[1:(k-1)]}] = 0$, $P_* V_{L_k} V_{X_j,\Delta_j} = 0$ for all $k,j$, $P_* V_{L_k} V_{L_j} = 0$ for all $k \neq j$, $P_* V_{X_k,\Delta_k} V_{X_j,\Delta_j} = 0$ for all $k \neq j$, and $V = \sum_{k=1}^K (V_{L_k} + V_{X_{k},\Delta_{k}})$.

    Let $V \in \mathcal{V}$ be fixed. For all $\epsilon$ in a sufficiently small neighborhood $\mathcal{N}$ of zero, we define $P_\epsilon$ via its Radon-Nikodym derivative with respect to $P_*$:
    \begin{align*}
        \frac{\intd P_\epsilon}{\intd P_*}: o &\mapsto \prod_{k=1}^K \left\{ 1+\epsilon V_{L_k}(\ell_k \mid h_{k-1},x_{[1:(k-1)]},\delta_{[1:(k-1)]}) \right\} \left\{ 1+\epsilon V_{X_{k},\Delta_{k}}(x_{k},\delta_{k} \mid h_k,x_{[1:(k-1)]},\delta_{[1:(k-1)]}) \right\}.
    \end{align*}
    It is straightforward to verify that the score function of the parametric submodel $\{ P_\epsilon: \epsilon \in \mathcal{N}\}$ at $\epsilon=0$ is $V$.
    We also define components $P_{\epsilon,L_k}$ and $P_{\epsilon,X_{k},\Delta_{k}}$ of $P_\epsilon$ similarly to $P_*$. In addition, similarly to $P_*$, we use the subscript $\epsilon$ for quantities/functions of $P_\epsilon$.
    It is also straightforward to verify that
    \begin{align*}
         \frac{\intd P_{\epsilon,L_k}}{\intd P_{*,L_k}} (\ell_k \mid h_{k-1},x_{[1:(k-1)]},\delta_{[1:(k-1)]}) &= 1+\epsilon V_{L_k}(\ell_k \mid h_{k-1},x_{[1:(k-1)]},\delta_{[1:(k-1)]}), \\
         \frac{\intd P_{\epsilon,X_{k},\Delta_{k}}}{\intd P_{*,X_{k},\Delta_{k}}} (x_{k},\delta_{k} \mid h_k,x_{[1:(k-1)]},\delta_{[1:(k-1)]}) &= 1+\epsilon V_{X_{k},\Delta_{k}}(x_{k},\delta_{k} \mid h_k,x_{[1:(k-1)]},\delta_{[1:(k-1)]}).
    \end{align*}

    We first show the pathwise derivative of $P \mapsto S_{P,k}$ for all $k$. 
    Let $x^\dagger := (t_2-t_1,t_3-t_2,\ldots,t_K-t_{K-1})$.
    \protect\citetsupp{Gill1994,Reid1981} showed the pathwise differentiability of the marginal survival function under independent censoring.
    We apply this result conditional on $X>t_k$ and $H_k$ to obtain that, under Conditions~\ref{condition: cond ind cens} and \ref{condition: consistency of strong positivity},
    \begin{align}
        &\left. \frac{\partial S_{\epsilon,k}(t_{k+1} \mid h_k)}{\partial \epsilon} \right|_{\epsilon=0} \nonumber\\
        &= \int \left\{ \TS_{k,t_{k+1}}(S_{*,k},G_{*,k})(x,\delta \mid h_k) - S_{*,k}(t_{k+1} \mid h_k) \right\} V_{X_{k},\Delta_{k}}(x_{k},\delta_{k} \mid h_k,x_{[1:(k-1)]}=x^\dagger_{[1:(k-1)]},\delta_{[1:(k-1)]}=0) \nonumber \\
        &\qquad\qquad P_{*,X_k,\Delta_k}(\intd x_k,\intd \delta_k \mid h_k,x_{[1:(k-1)]}=x^\dagger_{[1:(k-1)]},\delta_{[1:(k-1)]}=0) \nonumber \\
        &= \expect_*[ \left\{ \TS_{k,t_{k+1}}(S_{*,k},G_{*,k})(X,\Delta \mid h_k) - S_{*,k}(t_{k+1} \mid h_k) \right\} V(O) \mid X>t_k,H_k=h_k ], \label{eq: Sk IF}
    \end{align}
    where the last step follows from the facts that $\TS_{k,t_{k+1}}(S_{*,k},G_{*,k})(x,\delta \mid h_k) - S_{*,k}(t_{k+1} \mid h_k)$ only depends on $(x_{[1:k]},\delta_{[1:k]},h_k)$ and that it has mean zero given $X>t_k,H_k=h_k$ (Lemma~\ref{lemma: DR transform remainder}).

    We next prove by induction that, for all $k=1,\ldots,K$,
    \begin{align}
        \left. \frac{\partial Q_{\epsilon,k}(h_k)}{\partial \epsilon} \right|_{\epsilon=0} &= \expect_* \left[ \left\{ \T_{k}( \{U_j, S_j, G_j\}_{j=k}^K)(O) - Q_{*,k}(h_k) \right\} V(O) \mid X>t_k,H_k=h_k \right] \label{eq: Qk IF}
    \end{align}
    In other words, $\T_{k}( \{U_j, S_j, G_j\}_{j=k}^K) - Q_{*,k}$ is a conditional influence function for estimating $Q_{*,k}$ \protect\citepsupp{Chernozhukov2024}.
    
    For $k=K$, $Q_{\epsilon,k}(h_k) = S_{\epsilon,k}(\tau \mid h_k)$ and $\T_{k}( \{U_{*,j}, S_{*,j}, G_{*,j}\}_{j=k}^K)(O) = \TS_{k,t_{k+1}}(S_{*,k},G_{*,k})(X,\Delta \mid H_k)$, so \eqref{eq: Qk IF} follows from \eqref{eq: Sk IF}. Suppose that \eqref{eq: Qk IF} holds for $k+1$. 
    Note that we can write $h_{k+1}$ as $(\ell_{k+1},h_k)$.
    By the recursive definition in Theorem~\ref{thm: G identify},
    \begin{align*}
        &\left. \frac{\partial Q_{\epsilon,k}(h_k)}{\partial \epsilon} \right|_{\epsilon=0} \\
        &= \left. \frac{\partial}{\partial \epsilon} \expect_{P_\epsilon} \left[ Q_{\epsilon,k+1}(H_{k+1}) \mid X>t_{k+1}, H_k=h_k \right] S_{\epsilon,k}(t_{k+1} \mid h_k) \right|_{\epsilon=0} \\
        &= \frac{\partial}{\partial \epsilon} S_{\epsilon,k}(t_{k+1} \mid h_k) \int Q_{\epsilon,k+1}(\ell_{k+1},h_k) \{ 1+\epsilon V_{L_{k+1}}(\ell_{k+1} \mid h_k,x_{[1:k]}=x^\dagger_{[1:k]},\delta_{[1:k]}=0) \} \\
        &\qquad\qquad P_{*,L_{k+1}}(\intd \ell_{k+1} \mid h_k,x_{[1:k]}=x^\dagger_{[1:k]},\delta_{[1:k]}=0) \Bigg|_{\epsilon=0} \\
        &= U_{*,k}(h_k) \int \left\{ \TS_{k,t_{k+1}}(S_{*,k},G_{*,k})(x,\delta \mid h_k) - S_{*,k}(t_{k+1} \mid h_k) \right\} \\
        &\qquad\qquad V_{X_{k},\Delta_{k}}(x_{k},\delta_{k} \mid h_k,x_{[1:(k-1)]}=x^\dagger_{[1:(k-1)]},\delta_{[1:(k-1)]}=0) \\
        &\qquad\qquad P_{*,X_k,\Delta_k}(\intd x_k,\intd \delta_k \mid h_k,x_{[1:(k-1)]}=x^\dagger_{[1:(k-1)]},\delta_{[1:(k-1)]}=0) \qquad \text{(Eq.~\ref{eq: Sk IF})} \\
        &\quad+ S_{*,k}(t_{k+1} \mid h_k) \\
        &\qquad\times \int \expect_* \left[ \left\{ \T_{k+1}( \{U_{*,j}, S_{*,j}, G_{*,j}\}_{j=k+1}^K)(O) - Q_{*,k+1}(H_{k+1}) \right\} V(O) \mid X>t_{k+1},H_{k+1}=(\ell_{k+1},h_k) \right] \\
        &\qquad\qquad P_{*,L_{k+1}}(\intd \ell_{k+1} \mid h_k,x_{[1:k]}=x^\dagger_{[1:k]},\delta_{[1:k]}=0) \qquad \text{(induction hypothesis)} \\
        &\quad+ S_{*,k}(t_{k+1} \mid h_k) \int Q_{*,k+1}(\ell_{k+1},h_k) V_{L_{k+1}}(\ell_{k+1} \mid h_k,x_{[1:k]}=x^\dagger_{[1:k]},\delta_{[1:k]}=0) \\
        &\qquad\qquad P_{*,L_{k+1}}(\intd \ell_{k+1} \mid h_k,x_{[1:k]}=x^\dagger_{[1:k]},\delta_{[1:k]}=0) \\
        &= U_{*,k}(h_k) \expect_* \left[ \left\{ \TS_{k,t_{k+1}}(S_{*,k},G_{*,k})(X,\Delta \mid h_k) - S_{*,k}(t_{k+1} \mid h_k) \right\} V(O) \mid X>t_k,H_k=h_k \right] \\
        &\quad+ S_{*,k}(t_{k+1} \mid h_k) \expect_* \Bigg[ \frac{\ind(X>t_{k+1})}{P_*(X>t_{k+1} \mid X>t_k,H_k=h_k)} \left\{ \T_{k+1}( \{U_{*,j}, S_{*,j}, G_{*,j}\}_{j=k}^K)(O) - Q_{*,k+1}(H_{k+1}) \right\} \\
        &\qquad\qquad\qquad\times V(O) \mid X>t_k,H_k=h_k \Bigg] \\
        &\quad+ S_{*,k}(t_{k+1} \mid h_k) \expect_* \Bigg[ \frac{\ind(X>t_{k+1})}{P_*(X>t_{k+1} \mid X>t_k,H_k=h_k)} \left\{ Q_{*,k+1}(H_{k+1}) - U_{*,k}(H_k) \right\} \\
        &\qquad\qquad\qquad\times V(O) \mid X>t_k,H_k=h_k \Bigg] \\
        &= \expect_*[f(O) V(O) \mid X>t_k,H_k=h_k]
    \end{align*}
    where $f(o)$ is defined as
    \begin{align*}
        & U_{*,k}(h_k) \left\{ \TS_{k,t_{k+1}}(S_{*,k},G_{*,k})(x,\delta \mid h_k) - S_{*,k}(t_{k+1} \mid h_k) \right\} \\
        &\quad+ S_{*,k}(t_{k+1} \mid h_k) \frac{\ind(x>t_{k+1})}{P_*(X>t_{k+1} \mid X>t_k,H_k=h_k)} \left\{ \T_{k+1}( \{U_{*,j}, S_{*,j}, G_{*,j}\}_{j=k+1}^K)(o) - Q_{*,k+1}(h_{k+1}) \right\} \\
        &\quad+ S_{*,k}(t_{k+1} \mid h_k) \frac{\ind(x>t_{k+1})}{P_*(X>t_{k+1} \mid X>t_k,H_k=h_k)} \left\{ Q_{*,k+1}(h_{k+1}) - U_{*,k}(h_k) \right\} \\
        &= \frac{\ind(x>t_{k+1})}{G_{*,k}(t_{k+1} \mid h_k)} \left\{ \T_{k+1}( \{U_{*,j}, S_{*,j}, G_{*,j}\}_{j=k+1}^K)(o) - U_{*,k}(h_k) \right\} \\
        &\quad+ U_{*,k}(h_k) \TS_{k,t_{k+1}}(S_{*,k},G_{*,k})(x,\delta \mid h_k) - Q_{*,k}(h_k) \\
        &= \T_{k}( \{U_{*,j}, S_{*,j}, G_{*,j}\}_{j=k}^K)(o) - Q_{*,k}(h_k).
    \end{align*}
    When simplifying $f$, we have used Condition~\ref{condition: cond ind cens} and \eqref{eq: SDR transform recursive}.
    Thus, \eqref{eq: Qk IF} holds for $k$, and so \eqref{eq: Qk IF} holds for all $k$ by induction.

    Since $Q_{*,0} = \expect_*[Q_{*,1}(L_1)]$, by \eqref{eq: Qk IF} at $k=1$ and a similar calculation as above,
    \begin{align*}
        \left. \frac{\partial Q_{\epsilon,0}}{\partial \epsilon} \right|_{\epsilon=0} &= \left. \frac{\partial}{\partial \epsilon} \int Q_{\epsilon,1}(\ell_1) \{1+\epsilon V_{L_1}(\ell_1) \} P_{*,L_1}(\intd \ell_1) \right|_{\epsilon=0} \\
        &= \expect_* \Big[ \expect_*[ \{ \T_1(\{U_{*,j}, S_{*,j}, G_{*,j}\}_{j=1}^K)(O) - Q_{*,1}(L_1) \} V(O) \mid L_1] \Big] + \expect_*[ \{ Q_{*,1}(L_1) - Q_{*,0} \} V(O) ] \\
        &= \expect_*[D(\{U_{*,j}, S_{*,j}, G_{*,j}\}_{j=1}^K, Q_{*,0}) V(O)],
    \end{align*}
    where the last step follows from the tower rule.
    Since $V \in \mathcal{V}$ is arbitrary, the closure of $\mathcal{V}$ is the tangent set $L_0^2(P_*)$, and $\expect_*[D(\{U_{*,j}, S_{*,j}, G_{*,j}\}_{j=1}^K, Q_{*,0})]=0$ (by Theorem~\ref{thm: SDR identify}), we have that $D(\{U_{*,j}, S_{*,j}, G_{*,j}\}_{j=1}^K, \allowbreak Q_{*,0})$ is the canonical gradient of $P \mapsto Q_{P,0}$ at $P_*$ under a nonparametric model.
\end{proof}

The following lemma shows that the remainder is negligible.

\begin{lemma} \label{lemma: negligible remainder}
    Under Conditions~\ref{condition: cond ind cens}, \ref{condition: consistency of strong positivity}, \ref{condition: negligible remainder} and \ref{condition: bounded U}, $\| R_k(\{\hat{U}_j^\SDR, \hat{S}_j, \hat{G}_j\}_{j=k}^K \mid \cdot) \|_k = \smallo_p(n^{-1/2})$ for all $k=1,\ldots,K$, $\| \bar{U}_k^\SDR - U_{*,k} \|_{k+1} = \smallo_p(n^{-1/2})$ for all $k=1,\ldots,K-1$, and $P_* \T_{1}( \{\hat{U}_j^\SDR, \hat{S}_j, \hat{G}_j\}_{j=1}^K) - Q_{*,0} = \smallo_p(n^{-1/2})$.
\end{lemma}

\begin{proof}
    We prove by induction. Since $R_{K}(\{\hat{U}_K^\SDR,\hat{S}_K,\hat{G}_K\} \mid H_K) = \bar{R}_{K,t_{K+1}}(\hat{S}_K,\hat{G}_K \mid H_K)$, Condition~\ref{condition: negligible remainder} implies that $\| R_{K}(\{\hat{U}_K^\SDR,\hat{S}_K,\hat{G}_K\} \mid \cdot) \|_K = \smallo_p(n^{-1/2})$. By \eqref{eq: SDR remainder}, $\| \bar{U}_{K-1}^\SDR - U_{*,K-1} \|_{K} \leq \| R_{K}(\{\hat{U}_K^\SDR,\hat{S}_K,\hat{G}_K\} \mid \cdot) \|_K = \smallo_p(n^{-1/2})$.

    Suppose that $\| R_{k+1}(\{\hat{U}_j^\SDR, \hat{S}_j, \hat{G}_j\}_{j=k+1}^K \mid \cdot) \|_{k+1} = \smallo_p(n^{-1/2})$, so that $\| \bar{U}_{k}^\SDR - U_{*,k} \|_{k+1} = \smallo_p(n^{-1/2})$. By Lemma~\ref{lemma: k k+1 norms}, \eqref{eq: SDR remainder recursive}, Conditions~\ref{condition: consistency of strong positivity}, \ref{condition: negligible remainder} and \ref{condition: bounded U}, and Cauchy-Schwarz inequality,
    \begin{align*}
        & \| R_k(\{\hat{U}_j^\SDR, \hat{S}_j, \hat{G}_j\}_{j=k}^K \mid \cdot) \|_k \\
        &\lesssim \| R_{k+1}(\{\hat{U}_j^\SDR, \hat{S}_j, \hat{G}_j\}_{j=k+1}^K \mid \cdot) \|_{k+1} + \| \hat{G}_{k} - G_{*,k} \|_{(t_{k},t_{k+1}],k} \| \hat{U}_{k}^\SDR - U_{*,k} \|_{k+1} + \| \bar{R}_{k,t_{k+1}}(\hat{S}_k,\hat{G}_k \mid \cdot) \|_k \\
        &\leq \| R_{k+1}(\{\hat{U}_j^\SDR, \hat{S}_j, \hat{G}_j\}_{j=k+1}^K \mid \cdot) \|_{k+1} + \| \hat{G}_{k} - G_{*,k} \|_{(t_{k},t_{k+1}],k} \| \hat{U}_{k}^\SDR - \bar{U}_{k}^\SDR \|_{k+1} \\
        &\quad+ \| \hat{G}_{k} - G_{*,k} \|_{(t_{k},t_{k+1}],k} \| \bar{U}_{k}^\SDR - U_{*,k} \|_{k+1} + \| \bar{R}_{k,t_{k+1}}(\hat{S}_k,\hat{G}_k \mid \cdot) \|_k \\
        &= \smallo_p(n^{-1/2}).
    \end{align*}
    By induction, $\| R_k(\{\hat{U}_j^\SDR, \hat{S}_j, \hat{G}_j\}_{j=k}^K \mid \cdot) \|_k = \smallo_p(n^{-1/2})$ for all $k=1,\ldots,K$, and $\| \bar{U}_k^\SDR - U_{*,k} \|_{k+1} = \smallo_p(n^{-1/2})$ for all $k=1,\ldots,K-1$.
    By \eqref{eq: SDR remainder}, $|P_* \T_{1}( \{\hat{U}_j^\SDR, \hat{S}_j, \hat{G}_j\}_{j=1}^K) - Q_{*,0}| = | P_* [\T_{1}( \{\hat{U}_j^\SDR, \hat{S}_j, \hat{G}_j\}_{j=1}^K) - Q_{*,1}] | \leq \| R_1(\{\hat{U}_j^\SDR, \hat{S}_j, \hat{G}_j\}_{j=1}^K \mid \cdot) \|_1 = \smallo_p(n^{-1/2})$.
\end{proof}

\begin{proof}[Proof of Theorem~\ref{thm: efficiency}: Efficiency of $\hat{Q}_0^\SDR$]
    In this proof, we show that $\hat{Q}_0^\SDR$ is asymptotically efficient under a nonparametric model.
    \begin{align*}
        \hat{Q}_0^\SDR - Q_{*,0} &= P_n \T_1(\{\hat{U}_{j}^\SDR, \hat{S}_{j}, \hat{G}_{j}\}_{j=1}^K) - P_* \T_1(\{U_{*,j}, S_{*,j}, G_{*,j}\}_{j=1}^K) \\
        &= (P_n-P_*) \T_1(\{U_{*,j}, S_{*,j}, G_{*,j}\}_{j=1}^K) + P_* \T_1(\{\hat{U}_{j}^\SDR, \hat{S}_{j}, \hat{G}_{j}\}_{j=1}^K) - Q_{*,0} \\
        &\quad+ (P_n-P_*) \left\{ \T_1(\{\hat{U}_{j}^\SDR, \hat{S}_{j}, \hat{G}_{j}\}_{j=1}^K) - \T_1(\{U_{*,j}, S_{*,j}, G_{*,j}\}_{j=1}^K) \right\} \\
        &= P_n D(\{U_{*,j}, S_{*,j}, G_{*,j}\}_{j=1}^K, Q_{*,0}) + \smallo_p(n^{-1/2}),
    \end{align*}
    where the last step follows from Lemma~\ref{lemma: negligible remainder} and Conditions~\ref{condition: IF consistency}--\ref{condition: Donsker} \protect\citepsupp[see, e.g., Eq.~2.1.8 in][]{vandervaart1996}.
    Hence, $\hat{Q}_0^\SDR$ is an asymptotically linear estimator of $Q_{*,0}$ with influence function $D(\{U_{*,j}, S_{*,j}, G_{*,j}\}_{j=1}^K, Q_{*,0})$ and is thus asymptotically efficient.
\end{proof}

\subsection{Product bias remainder for discrete times}

\begin{proof}[Proof of Lemma~\ref{lemma: equivalent difference for hazard and survival functions}]
    We prove the two inequalities separately. We prove the first inequality by induction on $j$. When $j=1$, we have that $|\hat{S}(t^j \mid h) - S_*(t^j \mid h)| = | [1-\hat{\lambda}(t^j \mid h)] -  [1-\lambda_*(t^j \mid h)]| = | \hat{\lambda}(t^j \mid h) - \lambda_*(t^j \mid h) |$ and the desired inequality holds trivially. Suppose that the desired inequality holds for $j-1$, then it holds that
    \begin{align}
        & \hat{S}(t^j \mid h) - S_*(t^j \mid h) \nonumber \\
        &= \hat{S}(t^{j-1} \mid h) [1-\hat{\lambda}(t^j \mid h)] - S_*(t^{j-1} \mid h) [1-\lambda_*(t^j \mid h)] \nonumber \\
        &= [S_*(t^{j-1} \mid h) + \hat{S}(t^{j-1} \mid h) - S_*(t^{j-1} \mid h)] [1-\hat{\lambda}(t^j \mid h)] - S_*(t^{j-1} \mid h) [1-\lambda_*(t^j \mid h)] \nonumber \\
        &= [\hat{S}(t^{j-1} \mid h) - S_*(t^{j-1} \mid h)] [1-\hat{\lambda}(t^j \mid h)] + S_*(t^{j-1} \mid h) [\lambda_*(t^j \mid h) - \hat{\lambda}(t^j \mid h)]. \label{eq: hazard and survival recursive}
    \end{align}
    Therefore, by the induction hypothesis and the fact that $\hat{\lambda}(t^j \mid h) \in [0,1]$,
    \begin{align*}
        & \left| \hat{S}(t^j \mid h) - S_*(t^j \mid h) \right| \\
        &\leq \left| \hat{S}(t^{j-1} \mid h) - S_*(t^{j-1} \mid h) \right| [1-\hat{\lambda}(t^j \mid h)] + S_*(t^{j-1} \mid h) \left| \hat{\lambda}(t^j \mid h) - \lambda_*(t^j \mid h) \right| \\
        &\lesssim \sup_{t \in (0,t_2]} \left| \hat{\lambda}(t \mid h_k) - \lambda_*(t \mid h_k) \right|,
    \end{align*}
    as desired.
    
    We now prove the second inequality. Rearranging \eqref{eq: hazard and survival recursive},
    \begin{align*}
        &\hat{\lambda}(t^j \mid h) - \lambda_*(t^j \mid h) \\
        &= \frac{1}{S_*(t^{j-1} \mid h)} \Bigg\{ [\hat{S}(t^{j-1} \mid h) - S_*(t^{j-1} \mid h)] [1-\hat{\lambda}(t^j \mid h)] - [\hat{S}(t^j \mid h) - S_*(t^j \mid h)] \Bigg\}.
    \end{align*}
    Since $S_*(t^{j-1} \mid h) \geq S_*(t_2 \mid h_k) \geq \epsilon > 0$ for some constant $\epsilon$, the desired inequality holds.
\end{proof}

\section{Additional results in COVID-19 vaccine trial analysis} \label{sec: vac trial2}

The log multiplicative CDE is defined as $\log\CDE(w_a,w_c) := \log\{1 \allowbreak- \beta_{*}(w_a,1,w_c)\} \allowbreak- \log \{1 \allowbreak- \beta_{*}(w_a,0,w_c)\}$. Due to the low incidence and the lack of participants with high antibody levels, particularly for $\tau \leq 90$, $\log\CDE$ estimates are unstable 
for relatively large values of
Day~22 antibody biomarker levels $w_a$.

\begin{figure}[h!tb]
    \centering
    \includegraphics[width=\linewidth]{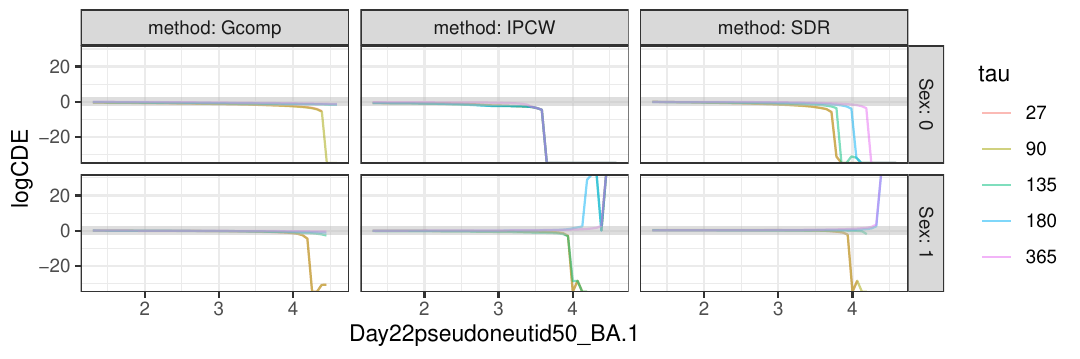}
    \caption{Similar to Fig.~\ref{fig: CDE nAb} for log multiplicative CDE of Day~22 neutralizing antibody level (\texttt{Day22pseudoneutid50\textunderscore BA.1}).}
    \label{fig: logCDE nAb}
\end{figure}

\begin{table}[h!tb]
    \centering
    \caption{Similar to Table~\ref{tab: marg nAb} accounting for informative censoring due to spike binding antibody biomarker levels at Days~22 and 43, instead of neutralizing antibody biomarker levels. All estimates and standard errors are multiplied by 100.}
    \begin{tabular}{c|r|r|r|r|r}
    \hline\hline
         Time $\tau$ & 27 & 90 & 135 & 180 & 365 \\
         \hline
         Est (S.E.) & -0.60 (0.25) & -0.91 (0.35) & -1.14 (0.48) & -0.86 (0.55) & -0.58 (0.67) \\
    \hline\hline
    \end{tabular}
    \label{tab: marg bAb}
\end{table}

\begin{figure}[h!tb]
    \centering
    \includegraphics[width=\linewidth]{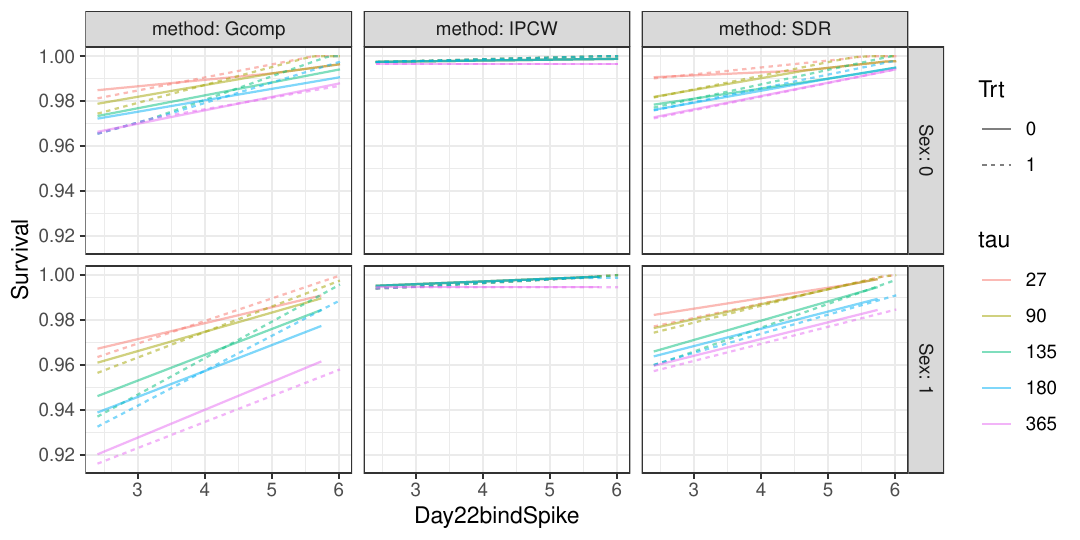}
    \caption{Similar to Fig.~\ref{fig: surv nAb} for Day~22 spike binding antibody level (\texttt{Day22bindSpike}).}
    \label{fig: surv bAb}
\end{figure}

\begin{figure}[h!tb]
    \centering
    \includegraphics[width=\linewidth]{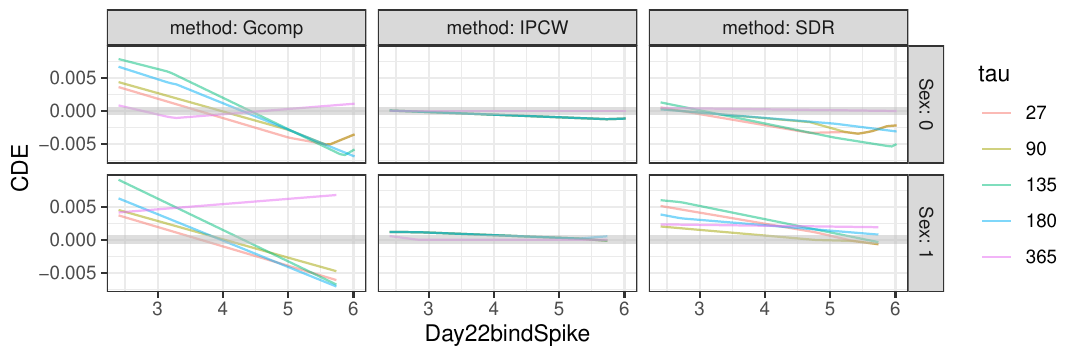}
    \caption{Similar to Fig.~\ref{fig: CDE nAb} for Day~22 spike binding antibody level (\texttt{Day22bindSpike}).}
    \label{fig: CDE bAb}
\end{figure}

\begin{figure}[h!tb]
    \centering
    \includegraphics[width=\linewidth]{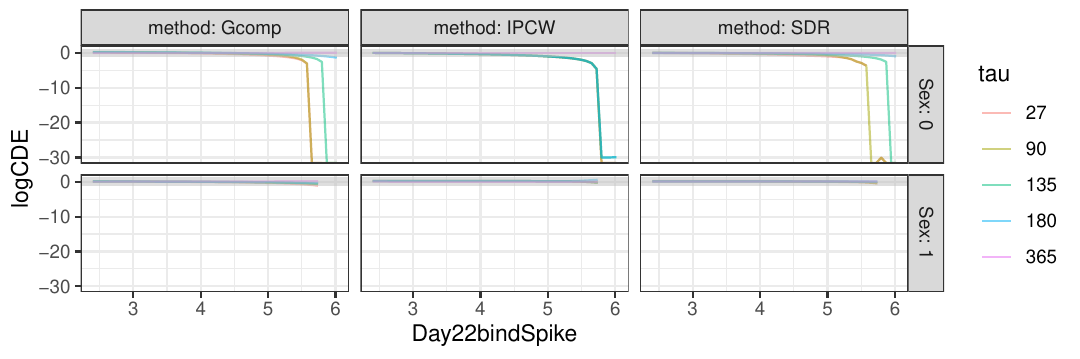}
    \caption{Similar to Fig.~\ref{fig: logCDE nAb} for Day~22 spike binding antibody level (\texttt{Day22bindSpike}).}
    \label{fig: logCDE bAb}
\end{figure}

\begin{figure}[h!tb]
    \centering
    \includegraphics[width=\linewidth]{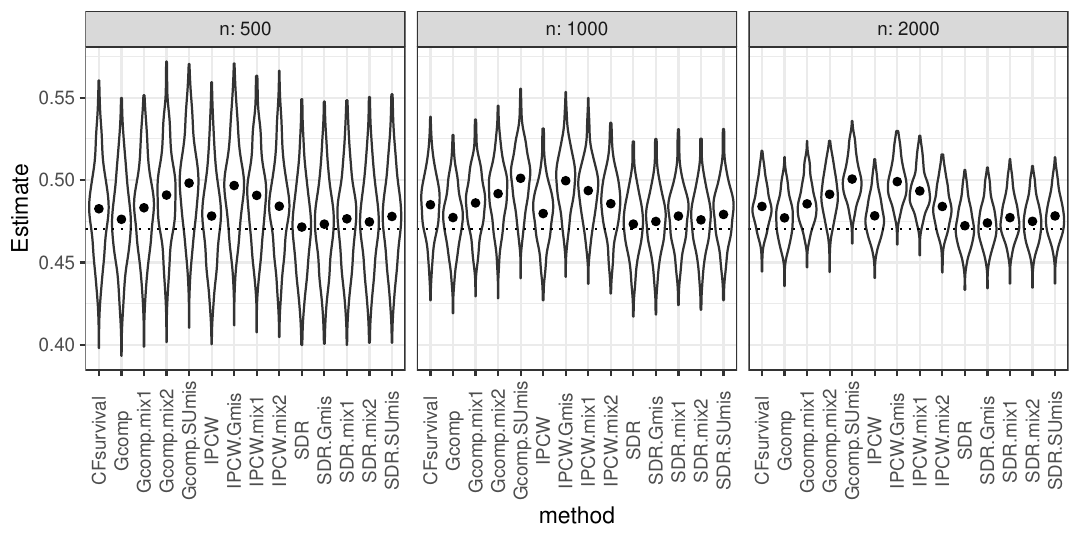}
    \caption{Similar to Fig.~\ref{fig: margp dist} without \texttt{survtmle}.}
    \label{fig: margp dist2}
\end{figure}

\begin{figure}[h!tb]
    \centering
    \includegraphics[width=\linewidth]{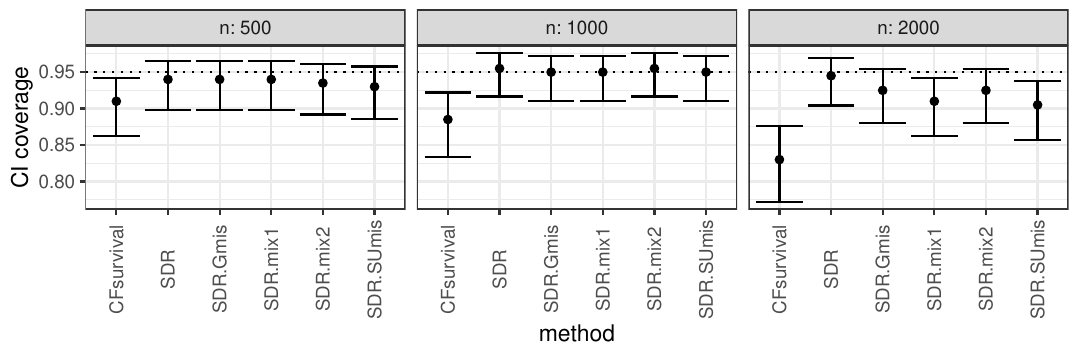}
    \caption{Similar to Fig.~\ref{fig: margp CI} without \texttt{survtmle}.}
    \label{fig: margp CI2}
\end{figure}

\begin{figure}[h!tb]
    \centering
    \includegraphics[width=\linewidth]{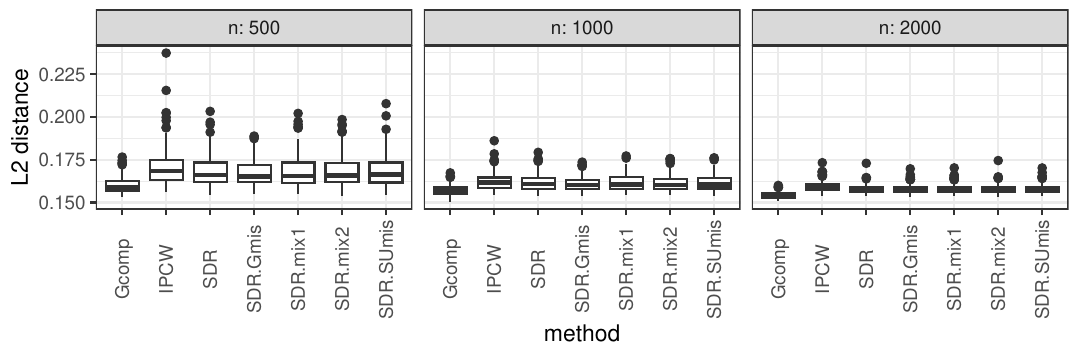}
    \caption{Similar to Fig.~\ref{fig: condp L2} without \texttt{Gcomp.SUmis} and \texttt{IPCW.Gmis}.}
    \label{fig: condp L22}
\end{figure}

\clearpage

\bibliographystylesupp{abbrvnat}
\bibliographysupp{reference}

\end{document}